\newtheorem{prop}{Proposition}
\newtheorem{cor}{Corollary}
\newcommand{\lp}{\left}
\newcommand{\rp}{\right}
\begin{document}

\title{Calculating CVaR and bPOE for Common Probability Distributions With Application to Portfolio Optimization and Density Estimation}

\titlerunning{Calculating CVaR and bPOE}        

\author{Matthew Norton~\and~Valentyn Khokhlov~\and~Stan~Uryasev}


\institute{Matthew Norton \at
              Naval Postgraduate School, Operations Research Department \\
              \email{mnorton@nps.edu}           
           \and
           V. Khokhlov \at \email{vkhokhlov.embals2016@london.edu}
           \and
           S. Uryasev \at 
           University of Florida, Department of Industrial and Systems Engineering, Risk Management and Financial Engineering Laboratory \\
           \email{uryasev@ufl.edu}             
}

\date{}

\maketitle

\begin{abstract}
Conditional Value-at-Risk (CVaR) and Value-at-Risk (VaR), also called the superquantile and quantile, are frequently used to characterize the tails of probability distribution's and are popular measures of risk in applications where the distribution represents the magnitude of a potential loss. Buffered Probability of Exceedance (bPOE) is a recently introduced characterization of the tail which is the inverse of CVaR, much like the CDF is the inverse of the quantile. These quantities can prove very useful as the basis for a variety of risk-averse parametric engineering approaches. Their use, however, is often made difficult by the lack of well-known closed-form equations for calculating these quantities for commonly used probability distribution's. In this paper, we derive formulas for the superquantile and bPOE for a variety of common univariate probability distribution's. Besides providing a useful collection within a single reference, we use these formulas to incorporate the superquantile and bPOE into parametric procedures. In particular, we consider two: portfolio optimization and density estimation. First, when portfolio returns are assumed to follow particular distribution families, we show that finding the optimal portfolio via minimization of bPOE has advantages over superquantile minimization. We show that, given a fixed threshold, a single portfolio is the minimal bPOE portfolio for an entire class of distribution's simultaneously. Second, we apply our formulas to parametric density estimation and propose the method of superquantile's (MOS), a simple variation of the method of moment's (MM) where moment's are replaced by superquantile's at different confidence levels. With the freedom to select various combinations of confidence levels, MOS allows the user to focus the fitting procedure on different portions of the distribution, such as the tail when fitting heavy-tailed asymmetric data.  
\keywords{Conditional Value-at-Risk \and Buffered Probability of Exceedance \and Superquantile \and Density Estimation \and Portfolio Optimization }
\end{abstract}

\section{Introduction}
When faced with randomness and uncertainty, some of the most popular techniques for dealing with such randomness are parametric in nature. Given a real valued random variable $X$, analysis can be greatly simplified if one assumes that $X$ belongs to a specific parametric family of distribution's. For example, the Method of Moment's (MM) is one of the simplest and most widely used methods for parametric density estimation. These techniques, however, often require that certain characteristics of the distribution family be representable by a simple, ideally closed-form, expression. For example, traditional MM uses closed-form expression's for the moments of the parametric distribution family. Similarly, the Matching of Quantile's (MOQ) procedure (see e.g., \cite{sgouropoulos2015matching,karian1999fitting}) uses expression's for the quantile function. In portfolio optimization, the availability of simple expression's for the mean and variance of portfolio returns yields a tractable Markowitz portfolio optimization problem.\footnote{See Section 4 for specifics.} For a variety of problem's, application of a parametric method relies upon the availability of a closed-form expression for a specific characteristic of the parametric family of interest. 

Luckily, for a variety of distribution's, closed-from expression's are available for commonly utilized characteristic's. These include characteristic's such as the moment's, the quantile, and the CDF. Over the past two decades, however, new fundamental characteristic's like the superquantile have emerged from the field of quantitative risk management with important applications across engineering fields like financial, civil, and environmental engineering. (see e.g., \cite{Rockafellar, Uryasev, davis2016analysis}). Furthermore, closed-form expressions for these characteristics, for a large variety of common parametric distribution families, have not been widely disseminated.  While emerging from specific engineering applications, some of these characteristics are very general and can be viewed as fundamental aspects of a random variable just like the mean or quantile. Thus, utilization of these characteristics within parametric methods is a natural consideration. To facilitate their use, however, we must develop closed-form expressions.\footnote{When closed-form expressions are not available, we look to provide simple calculation methods that might still be utilized within parametric methods.} 

We focus on developing these expressions for the superquantile and Buffered Probability of Exceedance (bPOE) for a variety of distribution families.  Developments in financial risk theory over the last two decades have heavily emphasized measurement of tail risk. After \cite{Artzner} introduced the concept of a \textit{coherent} risk measure, \cite{Uryasev} introduced the superquantile, also called Conditional Value at Risk (CVaR) in the financial literature. This began to be considered a preferable characterization of tail risk compared to the quantile, or Value-at-Risk (VaR). While some closed-form expression are available to use the superquantile within parametric procedures, see e.g., \cite{Uryasev, Landsman, Andreev}, the variety of distribution's discussed within each of these sources is limited.

We illustrate that for a variety of common distribution's, straightforward techniques such as integration of the quantile function obtain a closed-form expression for the superquantile that is easy to use within subsequent parametric methods. We attempt to include a variety, providing superquantile formulas for the Exponential, Pareto/Generalized Pareto (GPD), Laplace, Normal, LogNormal, Logistic, LogLogistic, Generalized Student-t, Weibull, and Generalized Extreme Value (GEV) distribution's. These provide examples varying from the exponentially tailed (Exponential, Pareto/GPD, Laplace), to the symmetric (Normal, Laplace, Logistic, Student-t), to the asymmetric heavier tailed (Weibull, LogLogistic, GEV) distribution's. While some of these formulas may exist elsewhere, we hope that this paper serves as a good resource for practitioners in search of superquantile formulas. 

While the superquantile has risen in popularity over the past decade, a related characteristic called Buffered Probability of Exceedance (bPOE) has recently been introduced, first by \cite{Rockafellar} in the context of Buffered Failure Probability and then generalized by \cite{mafusalov2018buffered}. This concept has grown in popularity within the risk management community with application in finance, logistics, analysis of natural disasters, statistics, stochastic programming, and machine learning (\cite{shang2018cash, uryasev2014buffered, davis2016analysis, mafusalov2018estimation, norton2017soft, norton2016maximization}. Specifically, bPOE is the inverse of the superquantile in the same way that the CDF is the inverse of the quantile. However, much like the superquantile when compared against the quantile, bPOE has many mathematically advantageous properties over the traditionally used Probability of Exceedance (POE). Direct optimization often reduces to convex or linear programming, it can be calculated via a one dimensional convex optimization problem, and it provides a risk-averse probabilistic assessment of the risk of experiencing outcomes larger than some fixed upper threshold. Thus, the second aim of this paper is to provide closed-form expressions for bPOE and, when unable to do so, show that calculation of bPOE is still simple, reducing to a one-dimensional convex optimization problem or a one-dimensional root finding problem. For the parametric portfolio application, in particular, we will see that when closed-form bPOE is unavailable and the superquantile is available, finding the optimal bPOE portfolio is no more difficult, computationally, than finding the optimal superquantile (CVaR) portfolio.

Motivating us to derive closed-form expressions (or simple calculation formulas) for the superquantile and bPOE for common distribution's is the inclusion of these risk averse, tail measurements within parametric methods. In particular, we explore the use of the superquantile and bPOE within parametric portfolio optimization and density estimation. First, we consider parametric portfolio optimization, where returns are assumed to follow a specific distribution and, using these assumptions, a tractable portfolio optimization problem is formulated and solved. We begin by narrowing our choices of distribution to only those that both fit the pattern of portfolio returns and generate tractable portfolio optimization problems. Then, we consider two companion problems, solving for portfolio's that minimize the superquantile (CVaR) of the distribution of potential losses (i.e. the average of the worst-case $100(1-\alpha)\%$ scenarios) and portfolio's that minimize bPOE of the loss distribution (i.e. the \textit{buffered} probability that losses will exceed a fixed upper threshold $x$). In comparing these problems, we discover that bPOE optimization can often be highly preferable to superquantile (CVaR) optimization in the parametric context. Specifically, for fixed $\alpha$, the portfolio that minimizes the superquantile depends upon the distributional assumption (i.e., even if $\alpha$ is fixed, changing the assumed parametric distribution for returns will change the contents of the optimal portfolio).  However, for fixed threshold $x$, the portfolio that minimizes bPOE does not depend upon the distributional assumption (at least for the specific class of distribution's we consider, which includes the Logistic, Laplace, Normal, Student-t, and GEV).  In other words, no matter which of these distribution's we choose, we will always achieve the same optimal portfolio for fixed value of threshold $x$. Thus, bPOE-based portfolio optimization can provide additional consistency with respect to parameter choices, eliminating one source of additional variability for the decision maker. 

Finally, we consider parametric density estimation, proposing a variant of MM where moments are replaced by superquantile's. This can also be seen as a natural variation of the MOQ procedure where quantiles are replaced by superquantile's. Made possible by the closed-form superquantile expressions, we show that this framework allows one to flexibly perform density estimation, allowing the user to focus the fitting procedure on specific portions of the distribution. For example, we illustrate by fitting a Weibull with additional emphasis put onto estimating the right tail. Compared against traditional MM and maximum likelihood (ML), we see that we get a better fit for such asymmetric, heavy tailed situations. 

\subsection{Organization of Paper}
We first provide a brief introduction to superquantile's and bPOE in Section 1.2. In Section 2, we give formulas for both the superquantile and bPOE for the Exponential, Pareto, Generalized Pareto, and Laplace distribution's. Along the way, we highlight some simple relationships between POE, bPOE, the quantile, and the superquantile. In Section 3, we treat distribution's for which a closed-form superquantile formula exists, but where we are unable to derive a simple closed-form bPOE formula. In order of appearance, we consider the Normal, LogNormal, Logistic, Generalized Student-t, Weibull, LogLogistic, and Generalized Extreme Value Distribution. However, we point out for these cases that because a formula for the superquantile is known, bPOE can be solved for via a simple root finding problem. We also illustrate for some cases that the one-dimensional convex optimization formula for bPOE can also be used in these cases. In Section 4, we illustrate the use of these formulas in portfolio optimization and parametric distribution approximation.

\subsection{Background and Notation}

When working with optimization of tail probabilities, one frequently works with constraints or objectives involving \textit{probability of exceedance} (POE), $p_x (X)=P(X > x)$, or its associated quantile $q_{\alpha}(X)=\min \{x | P(X\leq x)\geq \alpha \}$, where $\alpha \in [0,1]$ is a probability level. The quantile is a popular measure of tail risk in financial engineering, but when included in optimization problems via constraints or objectives, is quite difficult to treat with continuous (linear or non-linear) optimization techniques.

A significant advancement was made in \cite{Uryasev,rockafellar2002conditional} in the development of a replacement called the superquantile or CVaR. The superquantile is a measure of uncertainty similar to the quantile, but with superior mathematical properties. Formally, the superquantile (CVaR) for a continuously distributed $X$ is defined as,
$$\bar{q}_\alpha (X) =E \lp[ X | X > q_\alpha (X) \rp]= \frac{1}{1-\alpha} \int_{q_\alpha (X)}^{\infty} xf(x)dx = \frac{1}{1-\alpha} \int_{\alpha}^{1} q_p (X) dp.$$
Similar to $q_\alpha (X)$, the superquantile can be used to assess the tail of the distribution. The superquantile, though, is far easier to handle in optimization contexts. It also has the important property that it considers the magnitude of events within the tail. Therefore, in situations where a distribution may have a heavy tail, the superquantile accounts for magnitudes of low-probability large-loss tail events while the quantile does not account for this information.

The notion of \textit{buffered probability} was originally introduced by \cite{Rockafellar} in the context of the design and optimization of structures as the Buffered Probability of Failure (bPOF). Working to extend this concept, bPOE was developed as the inverse of the superquantile by \cite{mafusalov2018buffered} in the same way that POE is the inverse of the quantile. Specifically, for continuously distributed $X$, bPOE at threshold $x$ is defined in the following way, where $\sup X$ denotes the essential supremum of random variable $X$ and threshold $x \in [ E[X] , \sup X]$.

$$\bar{p}_x (X)= \{ 1-\alpha | \bar{q}_\alpha (X) = x \}  \;.$$

In words, bPOE calculates one minus the probability level at which the superquantile, the tail expectation, equals the threshold $x$. Roughly speaking, bPOE calculates the proportion of worst-case outcomes which average to $x$. Figure~\ref{bPOE_illustrate} presents an illustration of bPOE for a Lognormal distributed random variable $X$. We note that there exist two slightly different variants of bPOE, called Upper and Lower bPOE which are identical for continuous random variables. For this paper, we utilize only continuous random variables. For the interested reader, details regarding the difference between Upper and Lower bPOE can be found in \cite{mafusalov2018buffered}.

\begin{figure}[t]
\centering
\includegraphics[width=2.52in,height=2.25in]{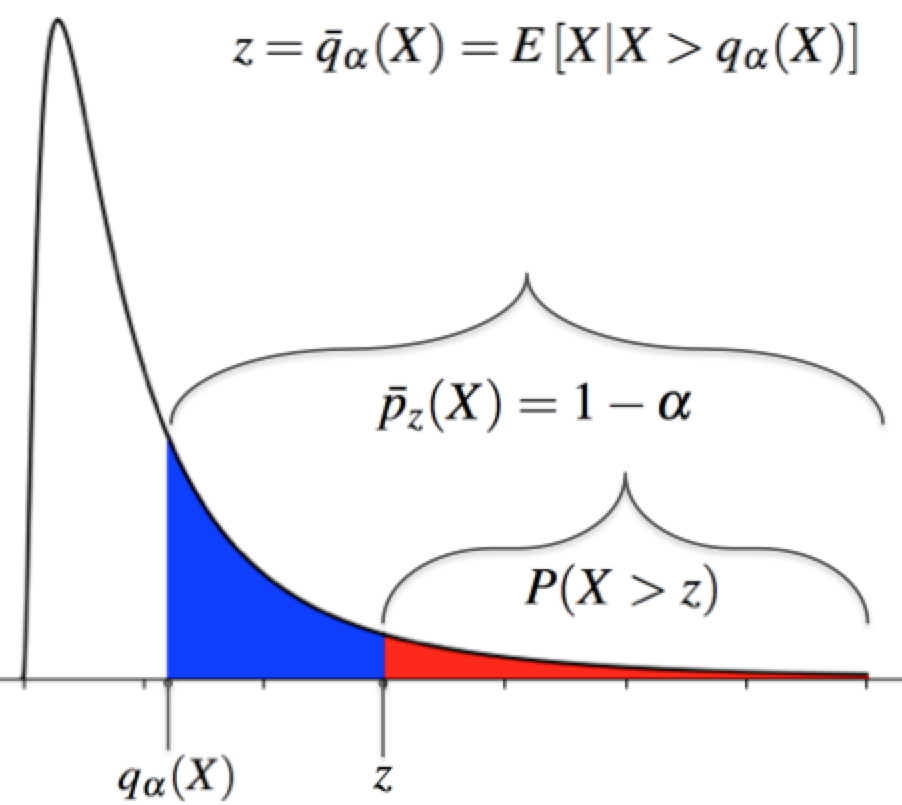} 
 \caption{Shown is the Probability Density Function (PDF) of $X\sim Lognormal(\sigma=1,\mu=0)$. Given threshold $z \in \mathbb{R}$, POE equals $P(X>z)$ the cumulative density in red. For the same threshold $z$, bPOE equals $\bar{p}_z(X)$ the combined cumulative density in red \textit{and} blue. By definition, the expectation of the worst-case $1-\alpha=\bar{p}_z(X)$ outcomes equals $z=\bar{q}_\alpha(X)$. These worst-case outcomes are those that are larger than the quantile $q_\alpha(X)$.}
 \label{bPOE_illustrate}
\end{figure}

Similar to the superquantile, bPOE is a more robust measure of tail risk, as it considers not only the probability that events/losses will exceed the threshold $x$, but also the magnitude of these potential events. Also, much like the superquantile, bPOE can be represented as the unique minimum of a one-dimensional convex optimization problem with the formulas given by \cite{norton2016maximization,mafusalov2018buffered} as follows, where $[\cdot]^+=\max\{\cdot,0\}$.
\begin{align*}
 \bar{p}_x (X) &= \min_{a \geq 0} E[ a(X-x) +1 ]^+ = \min_{\gamma <x} \frac{ E[X - \gamma]^+ } { x - \gamma} ,\\
 & \\
 \bar{q}_\alpha (X)&=\underset{\gamma}{\text{min }  } \gamma + \frac{ E[ X-\gamma ]^+} {1-\alpha} \; .
 \end{align*}
Note that these formulas are valid for general real valued random variables, not only continuously distributed random variables. It is also useful to note that the argmin of both the bPOE and superquantile optimization formulas gives the quantile. For the bPOE calculation formula, we have that the argmin is $\gamma^* = q_\alpha (X)$ where $\alpha = 1 - \bar{p}_x(X)$ and $a^* = \frac{1}{x-\gamma^*}$ for the other representation. For the superquantile calculation formula, we have that the argmin is $\gamma^* = q_\alpha(X)$ where $\alpha$ was the desired probability level for calculating the superquantile.

The bPOE concept is also closely related to the concept of a superdistribution function $\bar{F}(x)$, introduced by \cite{rockafellar2014random}. For the CDF, we have that POE equals $P(X>x) = 1 - F(x)$ and we have the inverse CDF given by $ F^{-1}(\alpha)=q_\alpha  (X).$ The superdistribution function $\bar{F}(x)$ is motivated by the inverse relation $ \bar{F}^{-1}(\alpha)=\bar{q}_\alpha  (X).$ Thus, bPOE equals $1-\bar{F}(x)$. The superdistribution function of a random variable $X$ can also be understood as the CDF of an auxiliary random variable $\bar{X}=\bar{q}_u (X)$ where $u \sim U(0,1)$ is a uniformly distributed random variable. In this case, $\bar{F}_X(x)=F_{\bar{X}}(x)$ where the subscript indicates that it is the distribution function associated with a particular random variable.

\section{Distributions With Closed Form Superquantile and bPOE}
In this section, we derive closed-form expressions for both the superquantile and bPOE for the Exponential, Pareto, Generalized Pareto, and Laplace distribution's. For these distribution's, we see that they exhibit a reproducing type of property, where the formula for POE is identical to bPOE up to a constant. The Laplace distribution presents an interesting case in which only the right tail exhibits this reproducing property. Along the way, for completeness, we also highlight relationships between the expressions for bPOE, POE, the superquantile, and the quantile.
\subsection{Exponential}
For this section, we have exponential random variable $X\sim Exp(\lambda)$. Recall that the Exponential parameter has range $\lambda >0$ with $E[X]=\sigma(X)=\frac{1}{\lambda}$, and that the Exponential CDF, PDF, and quantile are given by, 
\[  F(x) = \begin{cases}
1-e^{-\lambda x} & x \ge 0, \\
0 & x < 0.
\end{cases}, \quad f(x)= \begin{cases}
\lambda e^{-\lambda x} & x \ge 0, \\
0 & x < 0.
\end{cases},
\quad q_\alpha(X) = \frac{-\ln (1 - \alpha ) }{\lambda}
\]
\begin{prop} \label{prop:exp} Let $X\sim Exp(\lambda)$. Then, 
\[. \bar{q}_\alpha (X) =  \frac{ - \ln(1-\alpha) + 1 } { \lambda} , \quad \bar{p}_x (X) = e^{1-\lambda x}.  \] 
\end{prop}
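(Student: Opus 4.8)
The plan is to read off the superquantile from the quantile--integral representation $\bar{q}_\alpha(X) = \frac{1}{1-\alpha}\int_\alpha^1 q_p(X)\,dp$ recalled in Section 1.2, substituting the exponential quantile $q_p(X) = -\ln(1-p)/\lambda$. First I would change variables to $u = 1-p$, which turns the integral into $\frac{1}{\lambda(1-\alpha)}\int_0^{1-\alpha}(-\ln u)\,du$. Using the antiderivative $u - u\ln u$ of $-\ln u$ together with $\lim_{u\to 0^+} u\ln u = 0$, this evaluates to $\frac{1}{\lambda(1-\alpha)}\big[(1-\alpha) - (1-\alpha)\ln(1-\alpha)\big] = \frac{1 - \ln(1-\alpha)}{\lambda}$, which is the claimed formula. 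As an independent sanity check one can instead invoke the memorylessness of the exponential, $E[X\mid X>t] = t + 1/\lambda$, and plug in $t = q_\alpha(X) = -\ln(1-\alpha)/\lambda$ to recover $\bar{q}_\alpha(X) = q_\alpha(X) + 1/\lambda$.

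For bPOE I would use its defining inverse relation to the superquantile, $\bar{p}_x(X) = \{\,1-\alpha \mid \bar{q}_\alpha(X) = x\,\}$. Setting $\frac{1 - \ln(1-\alpha)}{\lambda} = x$ and solving for $1-\alpha$ gives $-\ln(1-\alpha) = \lambda x - 1$, hence $1-\alpha = e^{\,1-\lambda x}$, i.e. $\bar{p}_x(X) = e^{1-\lambda x}$. I would then check consistency with the admissible threshold range $x \in [E[X],\sup X] = [1/\lambda, \infty)$: on this interval $1-\lambda x \le 0$, so $e^{1-\lambda x}\in(0,1]$ is a legitimate value of $1-\alpha$, and since $\alpha \mapsto \bar{q}_\alpha(X)$ is continuous and strictly increasing on $[0,1)$ the solution is unique. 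An optional robustness step is to rederive the same value from the one-dimensional convex program $\bar{p}_x(X) = \min_{\gamma < x}\frac{E[X-\gamma]^+}{x-\gamma}$: compute $E[X-\gamma]^+ = e^{-\lambda\gamma}/\lambda$ for $\gamma \ge 0$, minimize $\frac{e^{-\lambda\gamma}}{\lambda(x-\gamma)}$ over $\gamma$, and confirm the optimizer is $\gamma^\ast = x - 1/\lambda = q_\alpha(X)$ with optimal value $e^{1-\lambda x}$, matching the remark that the argmin is the quantile.

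I do not anticipate a real obstacle: the only points requiring care are the boundary term $\lim_{u\to 0^+} u\ln u = 0$ in the integral and keeping track of the admissible threshold interval so that the reported bPOE is a valid probability; everything else is routine calculus.
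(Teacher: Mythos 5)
Your argument is correct and follows essentially the same route as the paper: integrating the exponential quantile via the substitution $y=1-p$ and the antiderivative $y\ln y - y$ to get $\bar{q}_\alpha(X)=\frac{1-\ln(1-\alpha)}{\lambda}$, then inverting $\bar{q}_\alpha(X)=x$ for $1-\alpha$ to obtain $\bar{p}_x(X)=e^{1-\lambda x}$. The extra sanity checks (memorylessness, the minimization formula) are consistent but not needed beyond the paper's proof.
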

\begin{proof}
First, note that $q_\alpha(X) = \frac{-\ln (1 - \alpha ) }{\lambda}$ for exponential RV's with rate parameter $\lambda$. We then have,
\begin{align*}
\bar{q}_\alpha (X) &=  \frac{1}{1-\alpha} \int_{\alpha}^{1} q_p (X) dp \\
&=  \frac{-1}{\lambda(1-\alpha)} \int_{\alpha}^{1} \ln (1 - p )  dp \\
&=  \frac{-1}{\lambda(1-\alpha)} \int_{1-\alpha}^{0} -\ln (y )  dy =  \frac{-1}{\lambda(1-\alpha)} \int_{0}^{1 - \alpha} \ln (y )  dy
\end{align*}

Since $\int \ln(y)dy = y \ln(y) - y + C$, we have,
\begin{align*}
\bar{q}_\alpha (X) &=   \frac{-1}{\lambda(1-\alpha)} \int_{0}^{1 - \alpha} \ln (y )  dy \\
&=   \frac{-1}{\lambda(1-\alpha)}  \lp[  ( 1-\alpha) \ln(1-\alpha) - (1-\alpha) \rp] =   \frac{ - \ln(1-\alpha) + 1 } { \lambda} 
\end{align*}

We can then see that,
\begin{align*}
\bar{p}_x (X) &=  \{ 1 - \alpha | \bar{q}_\alpha (X) = x \} \\
&=  \{ 1 - \alpha |\frac{ - \ln(1-\alpha) + 1 } { \lambda} = x \} \\
&=  \{ 1 - \alpha |  \ln(1-\alpha)  =  1-\lambda x \} \\
&=  \{ 1 - \alpha |  e^{\ln(1-\alpha)}  =  e^{1-\lambda x} \} =  \{ 1 - \alpha |  1-\alpha =  e^{1-\lambda x} \} = e^{1-\lambda x} 
\end{align*}
\qed
\end{proof}
Next, we relate bPOE and POE as well as the superquantile and quantile.
\begin{cor}\label{cor:exp}
Let $X \sim Exp(\lambda)$, with mean $\mu = \frac{1}{\lambda}$. Then, $\bar{p}_x (X) = P(X > x -\mu)$ and  $\bar{q}_\alpha(X) =  q_\alpha(X) + \mu$.
\end{cor}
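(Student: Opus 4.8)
The plan is to obtain both identities directly from the closed-form expressions already established in Proposition~\ref{prop:exp}, with no new integration required. Recall from that proposition that $\bar{q}_\alpha(X) = \frac{-\ln(1-\alpha)+1}{\lambda}$ and $\bar{p}_x(X) = e^{1-\lambda x}$, and recall the stated exponential quantile $q_\alpha(X) = \frac{-\ln(1-\alpha)}{\lambda}$.

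For the superquantile identity, I would simply compare the two formulas: $\bar{q}_\alpha(X) = \frac{-\ln(1-\alpha)}{\lambda} + \frac{1}{\lambda} = q_\alpha(X) + \frac{1}{\lambda}$, and since $\mu = \frac{1}{\lambda}$ this is exactly $\bar{q}_\alpha(X) = q_\alpha(X) + \mu$. This is a one-line algebraic rearrangement.

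For the bPOE identity, I would rewrite the exponent in $\bar{p}_x(X) = e^{1-\lambda x}$ using the relation $1 = \lambda\mu$, giving $e^{1-\lambda x} = e^{\lambda\mu - \lambda x} = e^{-\lambda(x-\mu)}$. Because bPOE is defined only for thresholds $x \in [E[X], \sup X] = [\mu, \infty)$, we have $x - \mu \ge 0$, so $e^{-\lambda(x-\mu)}$ is precisely the exponential survival function $P(X > t) = e^{-\lambda t}$ evaluated at $t = x - \mu \ge 0$. Hence $\bar{p}_x(X) = P(X > x - \mu)$.

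The only point requiring any care is this domain bookkeeping: the closed form $P(X>t) = e^{-\lambda t}$ holds only for $t \ge 0$, and that nonnegativity is supplied exactly by the standing assumption $x \ge E[X] = \mu$ in the definition of bPOE. Beyond noting this, there is no substantive obstacle; the corollary is an immediate consequence of Proposition~\ref{prop:exp}.
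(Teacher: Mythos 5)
Your proposal is correct and follows essentially the same route as the paper: both identities are read off directly from the closed-form expressions in Proposition~\ref{prop:exp}, with the bPOE exponent rewritten as $e^{-\lambda(x-\mu)}$ and matched to the exponential survival function. Your added remark that $x-\mu\ge 0$ (from the bPOE threshold restriction $x\ge E[X]=\mu$) is needed for the survival-function identification is a small point of care the paper leaves implicit, but it does not change the argument.
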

\begin{proof}
We know that $X$, being exponential, has CDF given by $P(X \geq x) = 1 - e^{-\lambda x}$. From Proposition 1, we know that 
\[
\bar{p}_x (X) = e^{(1 - \lambda x)} = e^{-\lambda(\frac{-1}{\lambda} + x)}\;.
\]

Then, since $\mu =\frac{1}{\lambda}$, it follows that $\bar{p}_x (X) = e^{-\lambda( x - \mu)} = 1 - P(X \leq x - \mu) = P(X > x - \mu)$. The equality for CVaR follows easily from Proposition 1 since  $q_\alpha(X) = \frac{-\ln (1 - \alpha ) }{\lambda}$.
\qed
\end{proof}

\subsection{Pareto}
Assume $X \sim Pareto(a,x_m)$. Recall that Pareto parameters have range $a>0, x_m>0$ with $E[X]=\begin{cases} \infty ,& a \leq 1 ,\\ \frac{a x_m}{a-1} ,& a>1 \end{cases}$ and $\sigma^2(X)=\begin{cases} \infty ,& a \in(0,2] ,\\ \frac{a x_m^2}{(a-1)^2 (a-2)} ,& a>2 \end{cases}$, and that the Pareto CDF, PDF, and quantile are given by,
\[  F(x) = \begin{cases}
1-\lp( \frac{x_m}{x} \rp)^a & x \ge x_m, \\
0 & x < x_m.
\end{cases} \quad,\; f(x)= \begin{cases}
\frac{ax_m^a}{x^{a+1}} & x \ge x_m, \\
0 & x < x_m.
\end{cases}
\quad,\; q_\alpha(X) = \frac{x_m }{(1-\alpha)^\frac{1}{a} }
\]

\begin{prop} \label{prop:pareto} Assume $X \sim Pareto(a,x_m)$ with $a >1$. Then, for $\alpha \in [0,1]$ and $x \geq E[X]$,
\[ \bar{q}_\alpha (X) = \frac{ x_m a }{ (1-\alpha)^{\frac{1}{a}}  (a-1)}  \;,\quad \bar{p}_x(X) = \lp( \frac{x_m a}{x(a-1) }  \rp)^a \;.
\]
Note that if $a \in (0,1]$, then $E[X] = \infty$ implying that $\bar{q}_\alpha (X) = \infty$ and $\bar{p}_x(X) =1$ for all $\alpha \in [0,1]$ and $x \in \mathbb{R}^n$.
\end{prop}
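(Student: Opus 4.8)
The plan is to mirror the proof of Proposition~\ref{prop:exp}: compute $\bar q_\alpha(X)$ by integrating the quantile function, and then obtain $\bar p_x(X)$ by inverting the resulting expression through the defining relation $\bar p_x(X) = \{1-\alpha \mid \bar q_\alpha(X) = x\}$.

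First I would start from the representation $\bar q_\alpha(X) = \frac{1}{1-\alpha}\int_\alpha^1 q_p(X)\,dp$ together with the stated Pareto quantile $q_p(X) = x_m (1-p)^{-1/a}$, giving $\bar q_\alpha(X) = \frac{x_m}{1-\alpha}\int_\alpha^1 (1-p)^{-1/a}\,dp$. The substitution $y = 1-p$ turns this into $\frac{x_m}{1-\alpha}\int_0^{1-\alpha} y^{-1/a}\,dy$. Here the assumption $a>1$ is exactly what is needed: it makes $-1/a > -1$, so the antiderivative $\frac{y^{1-1/a}}{1-1/a}$ is finite at $0$ and the integral evaluates to $\frac{a}{a-1}(1-\alpha)^{1-1/a}$. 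Multiplying through yields $\bar q_\alpha(X) = \frac{x_m a}{(1-\alpha)^{1/a}(a-1)}$, as claimed.

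Next I would solve $\frac{x_m a}{(1-\alpha)^{1/a}(a-1)} = x$ for $1-\alpha$: rearranging gives $(1-\alpha)^{1/a} = \frac{x_m a}{x(a-1)}$ and hence $1-\alpha = \bigl(\frac{x_m a}{x(a-1)}\bigr)^a$, which is the desired bPOE formula. As a consistency check I would note that plugging in $x = E[X] = \frac{a x_m}{a-1}$ gives $\bar p_x(X) = 1$, matching the fact that bPOE equals one at the mean, so the formula is coherent on the stated domain $x \ge E[X]$. Finally, for $a \in (0,1]$ I would observe that $-1/a \le -1$ makes $\int_0^{1-\alpha} y^{-1/a}\,dy$ diverge, so $\bar q_\alpha(X) = \infty$ for every $\alpha$; equivalently $E[X] = \infty$, and since the superquantile never takes a finite value, $\bar p_x(X) = 1$ for all $x$.

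There is essentially no deep obstacle here; the only point requiring care is the convergence of the $y^{-1/a}$ integral near $0$, which is precisely why the hypothesis $a>1$ is imposed and why the degenerate case $a\le 1$ must be stated separately. The remaining steps are routine algebra and the same inversion argument already used for the Exponential distribution.
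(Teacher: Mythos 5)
Your proof is correct, but it takes a genuinely different route from the paper. You compute the superquantile first, by integrating the quantile function $q_p(X)=x_m(1-p)^{-1/a}$ (exactly as the paper does for the Exponential), and then obtain bPOE by algebraically inverting $\bar q_\alpha(X)=x$; your convergence remark correctly isolates the role of $a>1$, and the check $\bar p_{E[X]}(X)=1$ is a nice sanity test. The paper instead goes in the opposite order: it uses the minimization representation $\bar p_x(X)=\min_{\gamma<x} E[X-\gamma]^+/(x-\gamma)$, computes $E[X-\gamma]^+$ from the fact that a Pareto conditioned on exceeding $\gamma$ is again Pareto (so $E[X\mid X>\gamma]=\tfrac{a\gamma}{a-1}$), solves the one-dimensional concave maximization to find the optimal $\gamma^*=\tfrac{x(a-1)}{a}$, and only then recovers $\bar q_\alpha$ by inverting the bPOE formula. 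Your route is shorter and more elementary, requiring only a power-function integral; the paper's route buys extra information that your argument does not produce, namely the minimizer $\gamma^*$, which is the quantile at level $1-\bar p_x(X)$ and is what makes the relation $\bar p_x(X)=P\bigl(X>\tfrac{x(a-1)}{a}\bigr)$ in Corollary~\ref{cor:pareto} immediate, and it also illustrates the reproducing-tail property of the Pareto that the paper emphasizes. If you keep your version, it would be worth one sentence noting that $\bar q_\alpha(X)$ is strictly increasing in $\alpha$, so the equation $\bar q_\alpha(X)=x$ has a unique solution and the inversion defining $\bar p_x(X)$ is legitimate for $x\ge E[X]$.
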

\begin{proof}
First, note that the conditional distribution of a pareto, conditioned on the event that the random value is larger than some $\gamma$, is simply another pareto with parameters $a,\gamma$. This implies that $E[ X | X > \gamma ] = \frac{ a \gamma}{ a - 1}$ if $a\geq 1$ otherwise the expectation is $\infty$. Also, $1- F(\gamma) = \lp( \frac{x_m}{\gamma} \rp)^a$. Since, 
$$ E[ X  - \gamma ]^+ = (E[ X | X > \gamma ] - \gamma)(1- F(\gamma)) \;,$$
we will have that,
$$ E[ X  - \gamma ]^+ = (\frac{ a \gamma}{ a - 1}- \gamma) \lp( \frac{x_m}{\gamma} \rp)^a \;.$$
This gives us bPOE formula,
\begin{align*}
 \bar{p}_x(X) &= \min_{x_m\leq\gamma<x} \frac{  \lp( \frac{ a \gamma}{ a - 1}- \gamma \rp) x_m^a }{ \gamma^a (x-\gamma) } \\
 &= \min_{x_m\leq\gamma<x} \frac{  \lp( \frac{ a }{ a - 1}- 1 \rp) x_m^a }{ \gamma^{a-1} (x-\gamma) } = \lp( \max_{x_m\leq\gamma<x} \frac{\gamma^{a-1} (x-\gamma)(a-1) }{x_m^a} \rp)^{-1}
\end{align*}
Since $a>1$, the maximization objective is concave over the range $\gamma \in (0,\infty)$ which contains the range $(x_m,x)$, so we just need to take the gradient of function $g(\gamma)=\frac{\gamma^{a-1} (x-\gamma)(a-1) }{x_m^a}$ and set it to zero to find the optimal $\gamma$ as follows:
\begin{align*}
 \frac{\partial g}{\partial \gamma} = \frac{ x(a-1)^2 \gamma^{a-2} - (a-1)a\gamma^{a-1} }{x_m^a} =0 &\implies x(a-1)^2 \gamma^{a-2} = (a-1)a\gamma^{a-1} \\ 
& \implies \frac{x(a-1)}{a} = \gamma \\
  \end{align*}
  Plugging this value of $\gamma$ into the objective of our bPOE formula yields,
\begin{align*}
 \bar{p}_x(X) &=\frac{  \lp( \frac{ \frac{ax(a-1)}{a}  }{a-1} -    \frac{x(a-1)}{a}     \rp) x_m^a  } {  \lp(\frac{x(a-1)}{a}\rp)^a \lp( x -  \frac{x(a-1)}{a} \rp) } \\
&  = \lp( \frac{x_m a}{x(a-1) }  \rp)^a
\end{align*}
CVaR is then equal to the value of $x$ which solves the equation $1-\alpha = \bar{p}_x(X)$ or,
\[ 1-\alpha = \lp( \frac{x_m a}{x(a-1) }  \rp)^a \;,
\]
which has solution,
\[ \bar{q}_\alpha (X) = \frac{ x_m a }{ (1-\alpha)^{\frac{1}{a}}  (a-1)} \;. 
\]
\qed
\end{proof}
\begin{cor} \label{cor:pareto}
Relating bPOE and POE, as well as the quantile and superquantile, we can say that,
\[ \bar{p}_x(X) = P(X> \frac{x(a-1)}{a}) = P(X > x) \lp( \frac{a}{a-1} \rp)^a \\ 
\text{ and    } \bar{q}_\alpha (X) = q_\alpha (X) \frac{ a }{ a-1}\;.
\]
\end{cor}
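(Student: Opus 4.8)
The plan is to obtain both identities by direct algebraic manipulation of the closed-form expressions already established in Proposition~\ref{prop:pareto}, combined with the Pareto quantile and tail formulas recalled at the start of this subsection. No new integration or optimization is required, and throughout we retain the standing assumption $a>1$ (so that $E[X]<\infty$ and the ratio $\tfrac{a}{a-1}$ is well defined).

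First I would dispatch the superquantile--quantile relation. Beginning from $\bar{q}_\alpha(X) = \frac{x_m a}{(1-\alpha)^{1/a}(a-1)}$, I factor out $\frac{x_m}{(1-\alpha)^{1/a}}$ and recognize it as $q_\alpha(X)$; the leftover factor is $\frac{a}{a-1}$, yielding $\bar{q}_\alpha(X)=q_\alpha(X)\,\frac{a}{a-1}$. For the bPOE--POE relation, I start from $\bar{p}_x(X)=\big(\tfrac{x_m a}{x(a-1)}\big)^a$ (Proposition~\ref{prop:pareto}): writing the base as $\tfrac{x_m}{\,x(a-1)/a\,}$ exhibits $\bar{p}_x(X)$ as the Pareto tail $P(X>y)=(x_m/y)^a$ evaluated at $y=\tfrac{x(a-1)}{a}$, which gives the first form; alternatively, pulling $\big(\tfrac{a}{a-1}\big)^a$ out front leaves $(x_m/x)^a=P(X>x)$, which gives the second form.

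The only step that is not purely mechanical is verifying that the tail formula $P(X>y)=(x_m/y)^a$ is actually valid at the shifted threshold, i.e.\ that $y=\tfrac{x(a-1)}{a}\ge x_m$; outside $[x_m,\infty)$ the Pareto survival function is identically $1$, so the substitution would otherwise be illegitimate. Since the corollary inherits the hypothesis $x\ge E[X]=\tfrac{a x_m}{a-1}$ from Proposition~\ref{prop:pareto}, we get $\tfrac{x(a-1)}{a}\ge x_m$ at once, so the substitution is justified. I expect this domain check to be the main (and still essentially trivial) obstacle; everything else is one or two lines of algebra.
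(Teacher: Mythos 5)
Your proposal is correct and follows the same route as the paper, which simply notes that the corollary follows from Proposition~\ref{prop:pareto} together with the known Pareto quantile and tail (POE) formulas; your algebra is exactly that verification, spelled out. The only addition is your domain check that $\tfrac{x(a-1)}{a}\ge x_m$ under $x\ge E[X]$, which the paper leaves implicit but which is a worthwhile (and correctly resolved) detail.
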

\begin{proof}
Follows from Proposition 1 and the known formulas for POE and the quantile.
\qed
\end{proof}
\subsection{Generalized Pareto Distribution (GPD)}
Assume $X \sim GPD(\mu,s,\xi)$. Recall that GPD parameters have range $\mu \in \mathbb{R}, s>0, \xi \in \mathbb{R}$ with $E[X]= \mu + \frac{s}{1-\xi}$ if $\xi <1$ and $\sigma^2(X)= \frac{s^2}{(1-\xi)^2(1-2\xi)}$ if $\xi <.5$, and that the GPD CDF and PDF are given by,
\[  F(x) = {\begin{cases}1-\left(1+{\frac  {\xi (x-\mu )}{s }}\right)^{{-1/\xi }}&{\text{for }}\xi \neq 0,\\1-\exp \left(-{\frac  {x-\mu }{s }}\right)&{\text{for }}\xi =0.\end{cases}} \quad,\; f(x)={\frac  {1}{s }}\left(1+{\frac  {\xi (x-\mu )}{s }}\right)^{{\left(-{\frac  {1}{\xi }}-1\right)}}\;.
\]
for $x \geq \mu$ when $\xi \geq 0$ and $\mu \leq x \leq \mu - \frac{s}{\xi}$ when $\xi <0$. Furthermore, the quantiles are given by,
\[
q_\alpha(X) = \begin{cases}\mu + \frac{s\lp(  (1-\alpha)^{-\xi} - 1 \rp)}{\xi}&{\text{for }}\xi \neq 0,\\\mu - s \ln(1-\alpha) &{\text{for }}\xi =0.\end{cases} \;
\]

\begin{prop} \label{prop:gen_pareto} Assume $X \sim GPD(\mu,s,\xi)$ with $-1<\xi <1$. Then,
\[ \bar{q}_\alpha (X) = \begin{cases}\mu + s\lp[ \frac{(1-\alpha)^{-\xi} }{1-\xi}   + \frac{(1-\alpha)^{-\xi} -1 }{\xi}  \rp]&{\text{for }}\xi \neq 0,\\\mu + s[1- \ln(1-\alpha) ]&{\text{for }}\xi =0.\end{cases} \;,\] \[ \bar{p}_x(X) = \begin{cases}\frac{  \lp(1+\frac{\xi(x-\mu)}{s}\rp)^{- \frac{1}{\xi} }  }{(1-\xi)^{ \frac{1}{\xi} } } &{\text{for }}\xi \neq 0,\\\ e^{ 1 - \lp( \frac{x-\mu}{s} \rp) }&{\text{for }}\xi =0.\end{cases} \;.
\]

\end{prop}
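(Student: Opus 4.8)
The plan is to obtain the superquantile directly from the quantile–integration formula $\bar q_\alpha(X) = \frac{1}{1-\alpha}\int_\alpha^1 q_p(X)\,dp$ recorded in Section~1.2, and then to read off bPOE by inverting the superquantile through the defining relation $\bar p_x(X) = \{1-\alpha \mid \bar q_\alpha(X) = x\}$, exactly the two–step template used for the Exponential and Pareto cases. Throughout I would carry the cases $\xi \neq 0$ and $\xi = 0$ in parallel, since the GPD quantile is given by two different expressions.

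For the superquantile with $\xi \neq 0$, I would substitute $q_p(X) = \mu + \frac{s((1-p)^{-\xi}-1)}{\xi}$ into the integral. The $\mu$ term contributes $\mu$ after dividing by $1-\alpha$, and for the remaining piece the change of variable $y = 1-p$ turns $\int_\alpha^1 (1-p)^{-\xi}\,dp$ into $\int_0^{1-\alpha} y^{-\xi}\,dy = \frac{(1-\alpha)^{1-\xi}}{1-\xi}$; here the restriction $\xi < 1$ is precisely what makes the exponent $-\xi > -1$ and the integral finite (equivalently, $E[X] < \infty$). Collecting terms gives $\bar q_\alpha(X) = \mu + \frac{s}{\xi}\left(\frac{(1-\alpha)^{-\xi}}{1-\xi} - 1\right)$, which I would then rewrite into the stated form using the partial–fraction identity $\frac{1}{1-\xi} + \frac{1}{\xi} = \frac{1}{\xi(1-\xi)}$. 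For $\xi = 0$ the computation is essentially that of Proposition~\ref{prop:exp}: the same substitution reduces it to $\int_0^{1-\alpha}\ln y\,dy = (1-\alpha)\ln(1-\alpha) - (1-\alpha)$, yielding $\bar q_\alpha(X) = \mu + s[1 - \ln(1-\alpha)]$.

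For bPOE I would solve $\bar q_\alpha(X) = x$ for $1-\alpha$. When $\xi = 0$ this is immediate, just as in Corollary~\ref{cor:exp}: $\frac{x-\mu}{s} = 1 - \ln(1-\alpha)$ gives $1-\alpha = e^{1 - (x-\mu)/s}$. When $\xi \neq 0$, starting from $\frac{x-\mu}{s} = \frac{(1-\alpha)^{-\xi}}{\xi(1-\xi)} - \frac{1}{\xi}$, I would isolate $(1-\alpha)^{-\xi}$, observe that $\xi(1-\xi)\left(\frac{x-\mu}{s} + \frac{1}{\xi}\right) = (1-\xi)\left(1 + \frac{\xi(x-\mu)}{s}\right)$, and then raise both sides to the power $-1/\xi$ to reach $\bar p_x(X) = \left(1 + \frac{\xi(x-\mu)}{s}\right)^{-1/\xi}\big/(1-\xi)^{1/\xi}$.

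None of the steps is deep; the \emph{main obstacle} is purely bookkeeping — keeping track of the sign of $\xi$ when manipulating powers of $1-\alpha$, treating the $\xi\neq 0$ and $\xi=0$ branches consistently, and checking that over the admissible threshold range $x \in [E[X], \sup X]$ the recovered value $1-\alpha$ actually lies in $[0,1]$, so that the inversion is legitimate. As an alternative to inversion one could instead derive bPOE from the convex formula $\bar p_x(X) = \min_{\gamma < x} E[X-\gamma]^+/(x-\gamma)$ as in the Pareto proof, using that a GPD conditioned on exceeding a threshold is again a GPD with shifted location; but inverting the already–computed superquantile is shorter, so I would prefer that route here.
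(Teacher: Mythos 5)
Your proposal is correct, but it reaches the superquantile by a genuinely different route than the paper. The paper does not integrate the quantile function; it invokes the threshold-stability property of the GPD, namely that $X-\gamma \mid X>\gamma \sim GPD(0,\, s+\xi(\gamma-\mu),\, \xi)$, so that the mean excess is $E[X-\gamma \mid X>\gamma] = \frac{s+\xi(\gamma-\mu)}{1-\xi}$, and then writes $\bar q_\alpha(X) = q_\alpha(X) + E[X-q_\alpha(X)\mid X>q_\alpha(X)]$, substituting the known quantile at the end; the bPOE formula is then obtained, exactly as in your plan, by solving $x=\bar q_\alpha(X)$ for $1-\alpha$ (a step the paper dismisses as an elementary exercise, and which you carry out explicitly and correctly, including the identity $\xi(1-\xi)\bigl(\frac{x-\mu}{s}+\frac{1}{\xi}\bigr)=(1-\xi)\bigl(1+\frac{\xi(x-\mu)}{s}\bigr)$). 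Your direct evaluation of $\frac{1}{1-\alpha}\int_\alpha^1 q_p(X)\,dp$ with the substitution $y=1-p$ is equally valid and arguably more self-contained: it needs only the quantile formula and makes transparent where $\xi<1$ (finite mean) enters, through integrability of $y^{-\xi}$ at $0$; your rewriting via $\frac{1}{1-\xi}+\frac{1}{\xi}=\frac{1}{\xi(1-\xi)}$ does recover the paper's stated two-term expression. What the paper's route buys is a structural interpretation, the superquantile as quantile plus mean excess with the excess distribution again GPD, which parallels its Pareto argument and explains why the conditional tail is so tractable, whereas your route is the more elementary computation. Both arguments are complete, and your attention to the admissible threshold range so that the recovered $1-\alpha$ lies in $[0,1]$ is a point the paper glosses over.
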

\begin{proof}
For these results, we rely on the fact that if $X \sim GPD(\mu,s,\xi)$, then $X-\gamma | X> \gamma \sim GPD(0,s+\xi(\gamma-\mu),\xi)$, meaning that the excess distribution of a GPD random variable is also GPD. Now, note also that if $\xi<1$, then $E[X]=\mu + \frac{s}{1-\xi}$. This gives us,
\[  E[X-\gamma | X>\gamma] = E[ GPD(0,s+\xi(\gamma-\mu),\xi)] = \frac{s+\xi(\gamma - \mu)}{1-\xi}
\]
which further implies that,
\begin{align*}
\bar{q}_\alpha(X) &= E[X-q_\alpha(X) | X>q_\alpha(X)] + q_\alpha(X) \\
&= \frac{s + \xi( q_\alpha(X) - \mu )}{1-\xi} +q_\alpha(X) \;.
\end{align*}
Plugging in the values of the quantile functions yields the final formulas. Using the formulas we just found for $\bar{q}_\alpha(X)$, it is an elementary exercise to solve for $\bar{p}_x(X)$ which equals $1-\alpha$ such that $\alpha$ solves the equation $x=\bar{q}_\alpha(X)$.
\qed
\end{proof}
\subsection{Laplace}
Assume $X \sim Laplace(\mu,b)$. Recall that Laplace parameters have range $\mu \in \mathbb{R}$, $b>0$ with $E[X]=\mu$ and $\sigma^2(X)=2b^2$, and that the Laplace CDF, PDF, and quantile function are given by,
\[  F(x) = \begin{cases}
1-\frac{1}{2}e^{-\frac{x-\mu}{b}} & x \ge \mu, \\
\frac{1}{2}e^{\frac{x-\mu}{b}} & x < \mu.
\end{cases} \quad,\; f(x)= \frac{1}{2b} e^{\frac{-|x-\mu|}{b}}
\quad,\; \]
\[q_\alpha(X) = \mu - b\,sign(\alpha-0.5)\,\ln(1 - 2|\alpha-0.5|)\;.\]

\begin{prop} \label{prop:laplace}  If $X \sim Laplace(\mu,b)$, then
\[ \bar{q}_\alpha (X) = \begin{cases}
\mu +b \lp(\frac{\alpha}{1-\alpha} \rp)(1 - \ln (2 \alpha)) & \alpha < .5, \\
\mu + b\lp(1 - \ln \lp(2(1- \alpha)\rp) \rp)  & \alpha \geq .5.
\end{cases}  \;,
\] \[
\bar{p}_x (X) = \begin{cases}
 \frac{1}{2}e^{1-\lp(\frac{x-\mu}{b} \rp)}& x \geq \mu +b, \\
1 + \frac{z}{\mathcal{W}( -2e^{-z-1}z)} & x < \mu + b.
\end{cases}
\]
where $z=\frac{x-\mu}{b}$ and $\mathcal{W}$ is the Lambert-$\mathcal{W}$ function.\footnote{Also called the product logarithm or omega function.}
\end{prop}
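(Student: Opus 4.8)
The plan is to obtain the superquantile first, by integrating the (piecewise) Laplace quantile, and then to recover bPOE by inverting the relation $\bar q_\alpha(X)=x$, exactly as in the proof of Proposition~\ref{prop:exp}.

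\textbf{Superquantile.} Starting from $\bar q_\alpha(X)=\frac{1}{1-\alpha}\int_\alpha^1 q_p(X)\,dp$, I would use that on $\left(0,\tfrac12\right)$ the Laplace quantile equals $\mu+b\ln(2p)$ while on $\left[\tfrac12,1\right)$ it equals $\mu-b\ln\!\left(2(1-p)\right)$. For $\alpha\ge\tfrac12$ only the second branch is integrated; the substitution $y=1-p$ together with $\int\ln y\,dy=y\ln y-y$ gives $\bar q_\alpha(X)=\mu+b\left(1-\ln(2(1-\alpha))\right)$. For $\alpha<\tfrac12$ the integral splits at $p=\tfrac12$: the $\left[\tfrac12,1\right)$ piece contributes the constant $\tfrac\mu2+\tfrac b2$, the $\left(\alpha,\tfrac12\right)$ piece is handled by the same antiderivative, and collecting terms yields $\bar q_\alpha(X)=\mu+b\frac{\alpha}{1-\alpha}\left(1-\ln(2\alpha)\right)$. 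Continuity at $\alpha=\tfrac12$ (both formulas give $\mu+b$) and the limit $\bar q_0(X)=\mu=E[X]$ confirm that the two pieces fit together.

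\textbf{bPOE.} Since $\bar q_\alpha$ is increasing in $\alpha$ and $\bar q_{1/2}(X)=\mu+b$, the range $x\ge\mu+b$ corresponds to $\alpha\ge\tfrac12$, where solving $x=\mu+b\left(1-\ln(2(1-\alpha))\right)$ for $1-\alpha$ immediately gives $\bar p_x(X)=\tfrac12 e^{1-(x-\mu)/b}$. For $x<\mu+b$ we are on the branch $\alpha<\tfrac12$; writing $z=\frac{x-\mu}{b}$, the equation $\bar q_\alpha(X)=x$ becomes $\frac{\alpha}{1-\alpha}\left(1-\ln(2\alpha)\right)=z$. I would clear the denominator to get $1+z-\ln(2\alpha)=\frac z\alpha$, set $u=\frac z\alpha$ (so $\alpha=\frac zu$), and reduce to $\ln u-u=\ln(2z)-z-1$, i.e. $u e^{-u}=2z e^{-z-1}$, equivalently $(-u)e^{-u}=-2e^{-z-1}z$. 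Hence $-u=\mathcal W\!\left(-2e^{-z-1}z\right)$, so $\bar p_x(X)=1-\alpha=1-\frac zu=1+\frac{z}{\mathcal W(-2e^{-z-1}z)}$.

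\textbf{Main obstacle.} The delicate step is this Lambert-$\mathcal W$ inversion. One must first check that the argument lies in the domain of $\mathcal W$: since $z e^{-z}\le\tfrac1e$ for all $z>0$, we have $2z e^{-z-1}\le 2/e^2<1/e$, so $-2e^{-z-1}z\in[-1/e,0)$ for $z\in(0,1)$ and a real root exists. Then one must pin down the branch: the constraint $\alpha<\tfrac12$ is equivalent, via the inequality $\ln s<s-1$ with $s=2\alpha$, to $u=\frac z\alpha>2$, so $\mathcal W(-2e^{-z-1}z)=-u<-2$, which forces the branch $\mathcal W_{-1}$ rather than the principal branch; this is consistent with the endpoint $z=1$, where $u=2$, $\mathcal W_{-1}(-2e^{-2})=-2$, and $\bar p_{\mu+b}(X)=\tfrac12$ as it must be. As an independent check one can instead start from $\bar p_x(X)=\min_{\gamma<x}\frac{E[X-\gamma]^+}{x-\gamma}$, computing $E[X-\gamma]^+=\tfrac b2 e^{-(\gamma-\mu)/b}$ for $\gamma\ge\mu$ and $E[X-\gamma]^+=(\mu-\gamma)+\tfrac b2 e^{(\gamma-\mu)/b}$ for $\gamma<\mu$; minimizing produces the interior minimizer $\gamma=x-b$ when $x\ge\mu+b$, and the same transcendental equation within the region $\gamma<\mu$ when $x<\mu+b$, reproducing both formulas.
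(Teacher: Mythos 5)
Your proposal is correct and follows essentially the same route as the paper: integrate the piecewise Laplace quantile (splitting at $p=\tfrac12$, using $\int \ln y\,dy = y\ln y - y$) to get the two-case superquantile, then invert $\bar{q}_\alpha(X)=x$, with the $x<\mu+b$ case reduced to the same transcendental equation $\tfrac{-z}{\alpha}e^{-z/\alpha}=-2ze^{-z-1}$ solved by the Lambert-$\mathcal{W}$ function. Your added verification that the argument lies in $[-1/e,0)$ and that the constraint $\alpha<\tfrac12$ forces the $\mathcal{W}_{-1}$ branch is a useful precision the paper leaves implicit, but it does not change the argument.
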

\begin{proof}
To get the superquantile, we begin with the integral representation:
\begin{align*}
\bar{q}_\alpha (X) &=  \frac{1}{1-\alpha} \int_{\alpha}^{1} q_p (X) dp \\
&=  \frac{1}{1-\alpha} \int_{\alpha}^{1}  \mu - b\,sign(p-0.5)\,\ln(1 - 2|p-0.5|) dp \\
&=   \mu -  \frac{b}{1-\alpha} \int_{\alpha}^{1} sign(p-0.5)\,\ln(1 - 2|p-0.5|) dp \\
&=   \mu -  \frac{b}{1-\alpha} \lp( \int_{\min\{\alpha,.5\}}^{.5} -\ln(2p) dp  + \int_{\max\{\alpha,.5\}}^{1} \ln(2(1-p)) dp   \rp) \;. 
\end{align*}
To evaluate the integral, we utilize simple substitution as well as the identity $\int \ln(y)dy = y \ln(y) - y + C$. After simplifying, we see that with $\alpha < .5$ the integral evaluates to,
\[ \bar{q}_\alpha (X)= \mu +b \lp(\frac{\alpha}{1-\alpha} \rp)(1 - \ln (2 \alpha)) \;.
\]
Similarly, we find that with $\alpha \geq .5$ the integral evaluates to,
\[ \bar{q}_\alpha (X) =\mu + b\lp(1 - \ln \lp(2(1- \alpha)\rp) \rp)\;.
\]
For bPOE, first assume that threshold $x \geq \mu +b$. Using our formula for CVaR, we see that $\bar{q}_{.5} (X) = \mu + b$. Thus, $x \geq \mu +b$ implies that $1-\bar{p}_x(X) \geq .5$ implying that,
\begin{align*}
\bar{p}_x (X) &=  \{ 1 - \alpha | \bar{q}_\alpha (X) = x , \alpha \geq .5\} \\
&=  \{ 1 - \alpha |\mu + b\lp(1 - \ln \lp(2(1- \alpha)\rp) \rp) = x \} \\
&= \frac{1}{2}e^{1-\lp(\frac{x-\mu}{b} \rp)} \;.
\end{align*}
Assume contrarily that $x< \mu + b$. Since $\bar{q}_{.5} (X) = \mu + b$, we have that $1-\bar{p}_x(X) < .5$ which implies that,
\begin{align*}
\bar{p}_x (X) &=  \{ 1 - \alpha | \bar{q}_\alpha (X) = x , \alpha < .5\} \\
&=  \{ 1 - \alpha |\mu +b \lp(\frac{\alpha}{1-\alpha} \rp)(1 - \ln (2 \alpha))= x \} \;.
\end{align*}
Letting $z=\frac{x-\mu}{b}$, we must now find $\alpha$ which solves the equation $\lp(\frac{\alpha}{1-\alpha} \rp)(1 - \ln (2 \alpha))= z$. We do so, via the following:
\begin{align*}
\lp(\frac{\alpha}{1-\alpha} \rp)(1 - \ln (2 \alpha))= z &\implies \frac{-z}{\alpha} = \frac{( \ln (2 \alpha) - 1)}{1-\alpha} \\
&\implies  e^{\frac{-z}{\alpha}} = e^{\frac{(\ln (2 \alpha) - 1 )}{1-\alpha}} = \lp( \frac{2\alpha}{e} \rp)^\frac{1}{1-\alpha} \\
&\implies e^{\frac{-z(1-\alpha)}{\alpha}} = \lp( \frac{2\alpha}{e} \rp)\\
& \implies  \frac{-z}{\alpha} e^{-z(\frac{1}{\alpha} -1)}  = -2ze^{-1} \\
&\implies \frac{-z}{\alpha} e^{\frac{-z}{\alpha} } = -2ze^{-z-1} \\
&\implies \frac{-z}{\alpha} = \mathcal{W}(-2ze^{-z-1}) \;.
\end{align*}
where the final step follows from the definition of the Lambert-$\mathcal{W}$ function which is given by the relation $xe^x=y \iff  \mathcal{W}(y)=x$.
Thus, $\frac{-z}{\alpha} = \mathcal{W}(-2ze^{-z-1}) \implies \bar{p}_x(X) = 1-\alpha =1 + \frac{z}{\mathcal{W}( -2e^{-z-1}z)} $.
\qed
\end{proof}


\section{Distributions With Closed Form Superquantile}
In this section, we derive closed-form expressions for the superquantile of the Normal, LogNormal, Logistic, Student-t, Weibull, LogLogistic, and GEV distribution's. The Normal, Logistic, and Student-t provide us with examples of symmetric distribution's with varying tail heaviness. The LogNormal, Weibull, LogLogistic, and GEV provide us with examples of asymmetric distribution's that have heavy right tails. In particular, we will utilize the Weibull formula for density estimation in Section 5. 

For these distribution's, we are not able to reduce calculation of bPOE to closed-form. However, we highlight for the case of the Normal and Logistic that bPOE can be calculated by solving a one-dimensional convex optimization problem or one-dimensional root finding problem. In general, we note that for continuous $X$, bPOE at $x$ equals $1-\alpha$ where $\alpha$ solves $\bar{q}_\alpha (X) =x$. Thus, if the superquantile is known in closed-form, this reduces to a simple one-dimensional root finding problem in $\alpha$.


\subsection{Normal}
Let $X \sim \mathcal{N}(0,1)$ be a standard normal random variable. Recall that 
\[  F(x) =\frac{1}{2} \lp[1 + \text{erf}\lp(\frac{x}{\sqrt{2}} \rp) \rp] ,\quad f(x)= \frac{1}{\sqrt{2\pi}} e^{\frac{-x^2}{2} }, \quad q_\alpha(X)=\sqrt{2} \text{erf}^{-1}(2\alpha -1) \;,\]
where $\text{erf}(\cdot)$ is the commonly known error function with $\text{erf}^{-1}(\cdot)$ denoting its inverse. 

We show that the superquantile can be calculated by utilizing the quantile function and PDF, which is a well known result (see e.g., \cite{Uryasev}). We also show that bPOE can be calculated in two ways: by solving a simple root finding problem involving only the PDF and CDF or by solving a convex optimization problem with gradients calculated via the commonly used error function. Some results are presented only for the Standard Normal $\mathcal{N}(0,1)$, but can easily be applied to the non-standard case $\mathcal{N}(\mu,\sigma)$ with appropriate shifting and scaling. 
\begin{prop}  \label{prop:normal_cvar} 
If $X \sim \mathcal{N}(\mu,\sigma)$, then 
\[
\bar{q}_\alpha (X) = \mu + \sigma \frac{ f( q_\alpha(\frac{X-\mu}{\sigma} ) )}{1-\alpha} \;.
\]
\end{prop}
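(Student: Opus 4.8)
The plan is to reduce to the standard normal case and then exploit the elementary identity $f'(z) = -z f(z)$ satisfied by the standard normal density. First I would record the affine equivariance of the quantile and superquantile: if $Z \sim \mathcal{N}(0,1)$ and $X = \mu + \sigma Z$ with $\sigma>0$, then $q_\alpha(X) = \mu + \sigma q_\alpha(Z)$ by a monotone change of variables, and hence, from the integral representation $\bar{q}_\alpha(X) = \frac{1}{1-\alpha}\int_{\alpha}^{1} q_p(X)\,dp$ recalled in Section 1.2, we get $\bar{q}_\alpha(X) = \mu + \sigma\, \bar{q}_\alpha(Z)$. Since $\frac{X-\mu}{\sigma} = Z$, it then suffices to prove $\bar{q}_\alpha(Z) = f(q_\alpha(Z))/(1-\alpha)$.

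Second, I would start from the tail-integral form $\bar{q}_\alpha(Z) = \frac{1}{1-\alpha}\int_{q_\alpha(Z)}^{\infty} z f(z)\,dz$, one of the equivalent definitions quoted earlier. The key observation is that for $f(z) = \frac{1}{\sqrt{2\pi}} e^{-z^2/2}$ one has $\frac{d}{dz}\left(-f(z)\right) = z f(z)$, so $-f$ is an antiderivative of the integrand. Evaluating the improper integral, $\int_{a}^{\infty} z f(z)\,dz = \left[-f(z)\right]_{a}^{\infty} = f(a)$ because $f(z) \to 0$ as $z \to \infty$. Setting $a = q_\alpha(Z)$ gives $\bar{q}_\alpha(Z) = f(q_\alpha(Z))/(1-\alpha)$, and pushing this back through the affine reduction yields exactly $\bar{q}_\alpha(X) = \mu + \sigma\, f\!\left(q_\alpha\!\left(\tfrac{X-\mu}{\sigma}\right)\right)/(1-\alpha)$.

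There is no real obstacle here; the only points requiring a little care are (i) the vanishing of $f$ at infinity that legitimizes the evaluation of the improper integral, and (ii) stating the affine-equivariance step cleanly, since it is reused implicitly for the LogNormal and other location-scale families later in the section. One could instead integrate $\frac{1}{1-\alpha}\int_{q_\alpha(X)}^{\infty} x\, f_X(x)\,dx$ directly against the $\mathcal{N}(\mu,\sigma)$ density using integration by parts, but the reduction to the standard normal is shorter and makes the role of the identity $f' = -zf$ transparent.
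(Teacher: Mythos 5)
Your proof is correct and follows essentially the same route as the paper: the paper invokes the inverse Mills ratio $E[X \mid X > \gamma] = f(\gamma)/(1-F(\gamma))$ for the standard normal as a known fact and evaluates it at $\gamma = q_\alpha(X)$, whereas you simply derive that same identity from scratch via the antiderivative relation $f'(z) = -z f(z)$ applied to the tail integral. Your explicit statement of the affine equivariance $\bar{q}_\alpha(\mu + \sigma Z) = \mu + \sigma \bar{q}_\alpha(Z)$ is a useful clarification of a step the paper handles only implicitly, but it does not change the substance of the argument.
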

\begin{proof}
It is well known that if $X \sim \mathcal{N}(0,1)$, then the conditional expectation is given by the inverse Mills Ratio, $E[ X | X > \gamma] =  \frac{f(\gamma)}{1 - F(\gamma) }$. It follows then that $\bar{q}_\alpha (X) = E[ X | X > q_\alpha(X) ] = \frac{f(q_\alpha(X))}{1 - F(q_\alpha(X)) }= \frac{f(q_\alpha(X))}{1 - \alpha}$.
\qed
\end{proof}
\begin{prop}  \label{prop:normal_bPOE_min} 
If $X \sim \mathcal{N}(0,1)$, then 
$$ \bar{p}_x (X) = \min_{\gamma < x} \frac{ f(\gamma) - \gamma(1-F(\gamma) ) } { x- \gamma} \;.$$
Furthermore, if $\gamma \in argmin$, then $\gamma$ equals the quantile of $X$ at probability level $1 -  \bar{p}_x (X) $. 
\end{prop}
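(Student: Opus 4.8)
The plan is to start from the one-dimensional convex representation of bPOE recalled in Section 1.2, namely $\bar{p}_x(X) = \min_{\gamma < x} \frac{E[X-\gamma]^+}{x-\gamma}$, and to evaluate the partial expectation $E[X-\gamma]^+$ in closed form for the standard normal. First I would write $E[X-\gamma]^+ = \int_\gamma^\infty (t-\gamma) f(t)\,dt = \int_\gamma^\infty t f(t)\,dt - \gamma\lp(1 - F(\gamma)\rp)$, and then use the elementary identity $f'(t) = -t f(t)$ for the standard normal density to get $\int_\gamma^\infty t f(t)\,dt = \big[-f(t)\big]_\gamma^\infty = f(\gamma)$. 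Substituting $E[X-\gamma]^+ = f(\gamma) - \gamma\lp(1 - F(\gamma)\rp)$ into the minimization formula yields the claimed expression verbatim.

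For the argmin statement I would either invoke the general fact noted immediately after the bPOE/superquantile calculation formulas in Section 1.2 — that the minimizer of $\frac{E[X-\gamma]^+}{x-\gamma}$ is $\gamma^* = q_\alpha(X)$ with $\alpha = 1 - \bar{p}_x(X)$ — or give a short self-contained first-order argument. For the latter, set $g(\gamma) = \frac{E[X-\gamma]^+}{x-\gamma}$ and use $\frac{d}{d\gamma}E[X-\gamma]^+ = -(1-F(\gamma))$ to find that $g'(\gamma)=0$ is equivalent to $\frac{E[X-\gamma]^+}{1-F(\gamma)} = x-\gamma$, i.e. $E[X \mid X > \gamma] = x$, i.e. $\bar{q}_{F(\gamma)}(X) = x$. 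Hence a stationary $\gamma$ satisfies $F(\gamma) = \alpha$ where $\bar{q}_\alpha(X) = x$, which by the definition of bPOE means $\alpha = 1 - \bar{p}_x(X)$, so $\gamma = q_{1-\bar{p}_x(X)}(X)$; plugging back gives $g(\gamma) = 1 - F(\gamma) = \bar{p}_x(X)$, confirming consistency.

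The points needing care — rather than genuine obstacles — are (i) checking that $g$ actually attains its infimum at an interior point of $(-\infty,x)$, which follows from convexity of $\gamma \mapsto E[X-\gamma]^+$ together with $g(\gamma)\to 1$ as $\gamma \to -\infty$ and $g(\gamma)\to +\infty$ as $\gamma \uparrow x$ (since $E[X-x]^+ > 0$ for any finite $x$), and (ii) noting that the stationary point lies strictly below $x$ so the constraint $\gamma < x$ is inactive. I expect no real difficulty; the only substantive computation is the identity $\int_\gamma^\infty t f(t)\,dt = f(\gamma)$, and everything else is bookkeeping with definitions already established in the paper.
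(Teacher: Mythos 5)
Your proposal is correct and follows essentially the same route as the paper: compute $E[X-\gamma]^+ = f(\gamma) - \gamma\lp(1-F(\gamma)\rp)$ (the paper cites the inverse Mills ratio, you derive the same identity directly from $f'(t)=-tf(t)$) and substitute it into the minimization formula $\bar{p}_x(X)=\min_{\gamma<x}E[X-\gamma]^+/(x-\gamma)$, with the argmin claim handled by the general fact from Section 1.2 that the minimizer is $q_\alpha(X)$ at $\alpha = 1-\bar{p}_x(X)$. Your added first-order-condition argument and attainment check are sound refinements but not a different method.
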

\begin{proof}
Note that for a standard normal random variable, the tail expectation beyond any threshold $\gamma$ is given by the inverse Mills Ratio,
$$ E[ X | X > \gamma ] = \frac{f(\gamma)}{1 - F(\gamma) } \;. $$
Note also that for any threshold $\gamma$ and any random variable we have,
$$ E[ X  - \gamma ]^+ = (E[ X | X > \gamma ] - \gamma)(1- F(\gamma)) \;.$$
Using the mills ratio gives us,
$$ E[ X  - \gamma ]^+ = (\frac{f(\gamma)}{1 - F(\gamma) } - \gamma)(1- F(\gamma)) = f(\gamma) - \gamma (1 - F(\gamma)) \;.$$
Plugging this result into the minimization formula for bPOE yields the final formula. 
\qed
\end{proof}
\begin{prop}\label{prop:normal_bPOE_root}
Let $X \sim \mathcal{N}(0,1)$ with $x \in \mathbb{R}$ given. If $\gamma$ is the solution to the equation 
$$ \frac{f(\gamma)}{1 - F(\gamma)} = x \; $$
then $\bar{p}_x (X) = \frac{ f(\gamma) - \gamma(1-F(\gamma) ) } { x- \gamma}$. Additionally, we will have that $q_\alpha (X) = \gamma$ and $\bar{q}_\alpha (X) =x$ at probability level $\alpha=1- \bar{p}_x (X)$.
\end{prop}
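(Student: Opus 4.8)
The plan is to read this proposition as Proposition~\ref{prop:normal_bPOE_min} restated through the first-order optimality condition for its minimizing $\gamma$, combined with the standard identification of that minimizer with the quantile. I would write it up via the following direct argument, and mention the alternative optimization-based argument as a sanity check.

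\emph{Direct route.} Given $\gamma$ with $f(\gamma)/(1-F(\gamma)) = x$, set $\alpha = F(\gamma)$, so that $q_\alpha(X) = \gamma$ because the standard normal CDF is continuous and strictly increasing. By the inverse Mills ratio identity already invoked in Propositions~\ref{prop:normal_cvar} and~\ref{prop:normal_bPOE_min}, $\bar{q}_\alpha(X) = E[X \mid X > q_\alpha(X)] = E[X \mid X > \gamma] = f(\gamma)/(1-F(\gamma)) = x$. Hence by the definition of bPOE, $\bar{p}_x(X) = 1 - \alpha = 1 - F(\gamma)$, which already delivers the two ``additional'' claims ($q_\alpha(X) = \gamma$ and $\bar{q}_\alpha(X) = x$ at $\alpha = 1 - \bar{p}_x(X)$). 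It then remains only to check the algebraic identity $1 - F(\gamma) = \bigl(f(\gamma) - \gamma(1-F(\gamma))\bigr)/(x-\gamma)$: substituting $x = f(\gamma)/(1-F(\gamma))$ into the right-hand side and clearing denominators makes both sides coincide in one line. One should also note $x - \gamma = E[X\mid X>\gamma] - \gamma > 0$, so the quotient is well defined.

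\emph{First-order-condition route.} Alternatively, start from $\bar{p}_x(X) = \min_{\gamma<x} g(\gamma)$ with $g(\gamma) = \bigl(f(\gamma)-\gamma(1-F(\gamma))\bigr)/(x-\gamma)$ from Proposition~\ref{prop:normal_bPOE_min}. Writing $N(\gamma) = f(\gamma) - \gamma(1-F(\gamma)) = E[X-\gamma]^+$ and using $f'(\gamma) = -\gamma f(\gamma)$, one finds $N'(\gamma) = -(1-F(\gamma))$, so the numerator of $g'(\gamma)$ is $N'(\gamma)(x-\gamma) + N(\gamma)$; setting this to zero cancels the $\gamma(1-F(\gamma))$ terms and leaves exactly $f(\gamma)/(1-F(\gamma)) = x$. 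Convexity of $N$ makes this critical point the minimizer, and the general fact recorded in the Background section that the argmin of the bPOE program equals $q_\alpha(X)$ with $\alpha = 1 - \bar{p}_x(X)$ recovers the remaining assertions.

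I expect the only genuine subtlety to be domain bookkeeping: one should record that the equation $f(\gamma)/(1-F(\gamma)) = x$ indeed has a unique solution, which holds because the inverse Mills ratio $\gamma \mapsto E[X\mid X>\gamma]$ is continuous and strictly increasing from $E[X] = 0$ (as $\gamma \to -\infty$) to $+\infty$, so a solution exists exactly for the admissible thresholds $x \ge E[X]$; and that the solution automatically satisfies $\gamma < x$ since a conditional mean strictly exceeds the level it is conditioned above. The rest is the routine algebra indicated above, so the written proof should stay short.
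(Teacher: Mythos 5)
Your proposal is correct and takes essentially the same route as the paper: the paper's one-line proof likewise combines the inverse-Mills-ratio identity $\bar{q}_\alpha(X) = f(q_\alpha(X))/\bigl(1-F(q_\alpha(X))\bigr)$ with the bPOE minimization formula of Proposition~\ref{prop:normal_bPOE_min}, which your two routes reproduce and merely flesh out (existence/uniqueness of $\gamma$, the one-line check that $\bigl(f(\gamma)-\gamma(1-F(\gamma))\bigr)/(x-\gamma)=1-F(\gamma)$, and the argmin-equals-quantile fact from the Background section).
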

\begin{proof}
This follows from the fact that $\bar{q}_\alpha (X) = E[ X | X > q_\alpha(X) ] = \frac{f(q_\alpha(X))}{1 - F(q_\alpha(X)) }$ and the optimization formula of bPOE given in the previous proposition for normally distributed variables.
\qed
\end{proof}

The following proposition provides the gradient calculation for solving the bPOE minimization problem. 
\begin{prop} \label{prop:normal_bPOE_min_integral}
For $X \sim \mathcal{N}(0,1)$, we have that the bPOE minimization formula has the following integral representation,
\begin{align*}
 \bar{p}_x (X) &= \min_{\gamma < x} \frac{ f(\gamma) - \gamma(1-F(\gamma) ) } { x- \gamma} \\
 &= \min_{\gamma < x} \frac{1}{\sqrt{2 \pi}}  \int_0^\infty \frac{ u e^{ \frac{-(\gamma + u )^2}{2} } }{x-\gamma} du \\
&= \min_{\gamma < x} \frac{  e^{ \frac{-\gamma^2}{2} }- \gamma \sqrt{ \frac{\pi}{2} }  \text{erfc}(\frac{\gamma}{\sqrt{2}}) }   { \sqrt{2\pi} (x - \gamma)  }
 \end{align*}
Furthermore, the function $g(u,\gamma ; x) = \frac{1}{\sqrt{2 \pi}}  \int_0^\infty \frac{ u e^{ \frac{-(\gamma + u )^2}{2} } }{x-\gamma} du$ is convex w.r.t. $\gamma$ over the range $\gamma \in (-\infty, x)$. Additionally, $g$ has gradient given by,
\begin{align*} \frac{\partial g}{\partial \gamma} &=\frac{1}{\sqrt{2 \pi}}  \int_0^\infty \frac{\partial }{\partial \gamma}\lp( \frac{ u e^{ \frac{-(\gamma + u )^2}{2} } }{x-\gamma}  \rp) du \\
&= \frac{  e^{ \frac{-\gamma^2}{2} }- x \sqrt{ \frac{\pi}{2} }  \text{erfc}(\frac{\gamma}{\sqrt{2}}) }   { \sqrt{2\pi} (x - \gamma)^2  }
\end{align*}
where $\text{erfc}(\cdot)$ denotes the complementary error function.
\end{prop}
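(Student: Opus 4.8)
The plan is to base everything on a single explicit evaluation of $E[X-\gamma]^+$, from which the integral representation, the gradient, and the convexity claim all follow. Recall from Proposition~\ref{prop:normal_bPOE_min} that the bPOE objective is $g(\gamma)=\frac{E[X-\gamma]^+}{x-\gamma}$ for $\gamma<x$. Writing $E[X-\gamma]^+=\frac{1}{\sqrt{2\pi}}\int_\gamma^\infty (v-\gamma)e^{-v^2/2}\,dv$ and substituting $u=v-\gamma$ immediately gives the middle expression $\frac{1}{\sqrt{2\pi}}\int_0^\infty \frac{u\,e^{-(\gamma+u)^2/2}}{x-\gamma}\,du$. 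For the closed form, split $\int_\gamma^\infty (v-\gamma)e^{-v^2/2}\,dv=\int_\gamma^\infty v e^{-v^2/2}\,dv-\gamma\int_\gamma^\infty e^{-v^2/2}\,dv$; the first piece is $e^{-\gamma^2/2}$ since $-e^{-v^2/2}$ is an antiderivative, and the second is $\sqrt{\pi/2}\,\mathrm{erfc}(\gamma/\sqrt2)$ after the substitution $t=v/\sqrt2$ and the definition $\mathrm{erfc}(z)=\frac{2}{\sqrt\pi}\int_z^\infty e^{-t^2}\,dt$. Dividing by $x-\gamma$ and using $\sqrt{2\pi}/2=\sqrt{\pi/2}$ yields the rightmost expression. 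This step is routine.

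For the gradient, I would first record that differentiation under the integral sign is legitimate here (the integrand and its $\gamma$-derivative are dominated by integrable functions of $u$, uniformly for $\gamma$ in compact subsets of $(-\infty,x)$), so the computation may be carried out on the closed form. The key input is $\frac{\partial}{\partial\gamma}E[X-\gamma]^+=-(1-F(\gamma))=-\tfrac12\mathrm{erfc}(\gamma/\sqrt2)$; applying the quotient rule to $g=\frac{E[X-\gamma]^+}{x-\gamma}$, the two terms carrying $\mathrm{erfc}(\gamma/\sqrt2)$ collapse via $-\tfrac12\mathrm{erfc}(\gamma/\sqrt2)\big[(x-\gamma)+\gamma\big]=-\tfrac{x}{2}\mathrm{erfc}(\gamma/\sqrt2)$, leaving precisely $\frac{\partial g}{\partial\gamma}=\frac{e^{-\gamma^2/2}-x\sqrt{\pi/2}\,\mathrm{erfc}(\gamma/\sqrt2)}{\sqrt{2\pi}(x-\gamma)^2}$.

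The convexity assertion is where I expect the real work to be. The cleanest observation is that the substitution $a=\frac{1}{x-\gamma}$ is a bijection of $(-\infty,x)$ onto $(0,\infty)$ under which $g(\gamma)$ becomes $E[a(X-x)+1]^+$, and this is convex in $a$ because $a\mapsto[a(X-x)+1]^+$ is piecewise affine, hence convex, for every realization of $X$; thus the one-dimensional minimization defining $\bar p_x(X)$ is a convex problem. To obtain convexity directly in the variable $\gamma$, I would differentiate twice: writing $m(\gamma)=E[X-\gamma]^+$, $m'(\gamma)=-(1-F(\gamma))$, $m''(\gamma)=f(\gamma)$, one gets $g''(\gamma)=\dfrac{m''(\gamma)(x-\gamma)^2+2m'(\gamma)(x-\gamma)+2m(\gamma)}{(x-\gamma)^3}$, whose denominator is positive for $\gamma<x$. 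Nonnegativity of the numerator — an upward-opening quadratic in $t=x-\gamma>0$ with vertex at $t=(1-F(\gamma))/f(\gamma)>0$ — reduces to a Gaussian tail (Mills-ratio) estimate of the shape $\frac{1-F(\gamma)}{f(\gamma)}\le\sqrt{\gamma^2+2}-\gamma$. Establishing this with the care required over the relevant range of $\gamma$ (elementary bounds such as $\frac{1-F(\gamma)}{f(\gamma)}<1/\gamma$ are too weak) is, in my view, the main obstacle; I would attack it by showing that $\gamma\mapsto\big(\sqrt{\gamma^2+2}-\gamma\big)f(\gamma)-(1-F(\gamma))$ has the correct sign via a monotonicity argument on its derivative.
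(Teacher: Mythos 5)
Your treatment of the first two claims is correct and is essentially the paper's own (very terse) proof: plug $E[X-\gamma]^+=f(\gamma)-\gamma(1-F(\gamma))$ into the minimization formula, rewrite the expectation as $\frac{1}{\sqrt{2\pi}}\int_0^\infty u\,e^{-(\gamma+u)^2/2}du$, express $1-F$ through $\mathrm{erfc}$, and differentiate; your quotient-rule collapse of the numerator to $f(\gamma)-x\bigl(1-F(\gamma)\bigr)$ reproduces the stated gradient exactly.

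The convexity part, however, is not just "the main obstacle" — the program you outline cannot be completed, because the key lemma you reduce it to is false, and so is the blanket claim you are trying to prove. The inequality $\frac{1-F(\gamma)}{f(\gamma)}\le\sqrt{\gamma^2+2}-\gamma$ fails for negative $\gamma$: the Mills ratio blows up like $\sqrt{2\pi}\,e^{\gamma^2/2}$ as $\gamma\to-\infty$ while the right-hand side grows only like $2|\gamma|$ (at $\gamma=-3$ the two sides are roughly $225$ versus $6.3$). Correspondingly, $g(\gamma)=\frac{E[X-\gamma]^+}{x-\gamma}$ is not convex on all of $(-\infty,x)$: since $E[X-\gamma]^+=-\gamma+o(1)$ as $\gamma\to-\infty$, we have $g(\gamma)\to 1$, whereas $\min_{\gamma<x}g=\bar p_x(X)<1$ for any $x>E[X]=0$; a convex function on $(-\infty,x)$ with a finite limit at $-\infty$ must be nondecreasing and hence could never dip below that limit. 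Concretely, for $x=1$: $g(-4)\approx 0.800$, $g(-3)\approx 0.750$, $g(-2)\approx 0.670$, so $g(-3)>\tfrac12\bigl(g(-4)+g(-2)\bigr)\approx 0.735$, violating midpoint convexity; equivalently your numerator $f(\gamma)t^2-2(1-F(\gamma))t+2E[X-\gamma]^+$ is about $-1.9$ at $\gamma=-3$, $t=x-\gamma=4$. What survives is exactly your first observation: convexity holds in the reparametrization $a=1/(x-\gamma)$, i.e. $a\mapsto E[a(X-x)+1]^+$ is convex, which yields quasi-convexity (unimodality) of $g$ in $\gamma$ but not convexity; and your second-derivative computation does give convexity on the restricted range $\gamma\in[0,x)$, since for $\gamma\ge 0$ the required bound $\frac{1-F(\gamma)}{f(\gamma)}\le\frac{2}{\gamma+\sqrt{\gamma^2+2}}$ does follow from classical Mills-ratio estimates (e.g. Sampford's bound). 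So the proposition's convexity assertion over all of $(-\infty,x)$ should itself be read with that correction; note the paper's proof is silent on convexity, so there is no argument there for you to recover.
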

\begin{proof}
To derive the integral representation, simply plug in the formula for $E[ X  - \gamma ]^+$, then utilize the definition of the PDF and CDF. The gradient calculation is a standard calculus exercise.
\qed
\end{proof}

\subsection{LogNormal}
Assume $X \sim LogNormal(\mu,s)$. Recall that LogNormal parameters have range $\mu \in \mathbb{R}$, $s>0$, with $E[X]=e^{\mu+\frac{s^2}{2} }$ and $\sigma^2(X)=(e^{s^2} -1) e^{2\mu+s^2} $ and that the LogNormal CDF, PDF, and quantile function are given by,
\[  F(x) =\frac{1}{2} \lp[1 + \text{erf}\lp(\frac{\ln x - \mu }{s \sqrt{2}} \rp) \rp] ,\quad f(x)= \frac{1}{xs\sqrt{2\pi}} e^{\frac{-(\ln x - \mu)^2}{2s^2} } ,\]
\[\quad q_\alpha(X)=e^{\mu + s\sqrt{2} \text{erf}^{-1}(2\alpha -1)} \;.\]

\begin{prop}  \label{prop:logistic_cvar} 
If $X \sim LogNormal(\mu,s)$, then
\[ \bar{q}_\alpha (X)  =  \frac{1}{2} e^{\mu + \frac{s^2}{2}} \frac{ \lp[ 1 + \text{erf}\lp( \frac{s}{\sqrt{2}} - \text{erf}^{-1}(2\alpha -1) \rp)  \rp] } { 1-\alpha }.\]
\end{prop}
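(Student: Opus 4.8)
The plan is to start from the integral representation of the superquantile, $\bar{q}_\alpha (X) = \frac{1}{1-\alpha}\int_{q_\alpha(X)}^{\infty} x f(x)\,dx$, and evaluate the integral by reducing it to a standard Gaussian tail integral. Since $X \sim LogNormal(\mu,s)$ means $\ln X \sim \mathcal{N}(\mu,s^2)$, I would substitute $x = e^{\mu + st}$ (equivalently $t = (\ln x - \mu)/s$), which turns $x f(x)\,dx$ into $e^{\mu}\, e^{st}\,\frac{1}{\sqrt{2\pi}} e^{-t^2/2}\,dt$ and, using the quantile formula $q_\alpha(X) = e^{\mu + s\sqrt{2}\,\text{erf}^{-1}(2\alpha-1)}$, sends the lower limit of integration to $t = \sqrt{2}\,\text{erf}^{-1}(2\alpha-1)$, the standard-normal quantile at level $\alpha$.

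Next, I would complete the square in the exponent: $-\tfrac{1}{2}t^2 + st = -\tfrac{1}{2}(t-s)^2 + \tfrac{1}{2}s^2$. This pulls the factor $e^{\mu + s^2/2}$ (which is exactly $E[X]$) out of the integral, leaving $\frac{1}{\sqrt{2\pi}}\int_{\sqrt{2}\,\text{erf}^{-1}(2\alpha-1)}^{\infty} e^{-(t-s)^2/2}\,dt$. Shifting $u = t - s$ makes this a pure Gaussian tail $\frac{1}{\sqrt{2\pi}}\int_{\sqrt{2}\,\text{erf}^{-1}(2\alpha-1)-s}^{\infty} e^{-u^2/2}\,du$, which I would write in terms of the complementary error function as $\tfrac{1}{2}\,\text{erfc}\!\left(\text{erf}^{-1}(2\alpha-1) - \tfrac{s}{\sqrt{2}}\right)$.

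Finally, I would convert to the stated form using $\text{erfc}(z) = 1 - \text{erf}(z)$ together with the oddness of $\text{erf}$, so that $\text{erfc}\!\left(\text{erf}^{-1}(2\alpha-1) - \tfrac{s}{\sqrt{2}}\right) = 1 + \text{erf}\!\left(\tfrac{s}{\sqrt{2}} - \text{erf}^{-1}(2\alpha-1)\right)$; dividing through by $1-\alpha$ then yields the claimed formula. There is no real obstacle here: the only place that requires care is the bookkeeping with the error-function identities — the sign flip inside the argument and the $1/\sqrt{2}$ scaling that relates $\Phi$, $\text{erf}$, and $\text{erfc}$ — and making sure the substitution and the limits of integration are handled consistently. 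As a sanity check I would let $\alpha \to 0$, where $\text{erf}^{-1}(-1) = -\infty$ forces the bracket to tend to $2$, recovering $\bar{q}_0(X) = E[X] = e^{\mu + s^2/2}$.
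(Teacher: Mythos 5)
Your proof is correct, and it takes a different route from the paper's. The paper works with the other integral representation of the superquantile, $\bar{q}_\alpha(X)=\frac{1}{1-\alpha}\int_\alpha^1 q_p(X)\,dp$, pulls out $e^{\mu}$, and then quotes an antiderivative of $e^{s\sqrt{2}\,\mathrm{erf}^{-1}(2p-1)}$ in the probability variable, namely $-\tfrac12 e^{s^2/2}\bigl(1+\mathrm{erf}\bigl(\tfrac{s}{\sqrt2}-\mathrm{erf}^{-1}(2p-1)\bigr)\bigr)$, evaluating it between $\alpha$ and $1$. You instead start from the tail expectation $\frac{1}{1-\alpha}\int_{q_\alpha(X)}^{\infty} x f(x)\,dx$, substitute $t=(\ln x-\mu)/s$, complete the square to extract $e^{\mu+s^2/2}=E[X]$, and reduce the remainder to a standard Gaussian tail, which you convert via $1-\Phi(w)=\tfrac12\,\mathrm{erfc}(w/\sqrt2)$ and the oddness of $\mathrm{erf}$; all the bookkeeping (the lower limit $\sqrt2\,\mathrm{erf}^{-1}(2\alpha-1)$, the shift $u=t-s$, the sign flip in the erf argument) checks out, and your limiting check $\alpha\to 0$ recovering $E[X]$ is consistent. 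The two computations are of course equivalent — the change of variables $p=\Phi(t)$ maps the paper's integral onto yours — but your version is more self-contained: it derives everything from elementary steps rather than asserting a nontrivial antiderivative involving $\mathrm{erf}^{-1}$, and it makes transparent why the factor $e^{\mu+s^2/2}$ (the mean) appears; the paper's version is shorter on the page provided one accepts or separately verifies that antiderivative, which is essentially your substitution in disguise.
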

\begin{proof} We simply evaluate the integral of the quantile function as follows.
\begin{align*}
\bar{q}_\alpha (X) &=  \frac{1}{1-\alpha} \int_{\alpha}^{1} q_p (X) dp \\
&= \frac{1}{1-\alpha}\int_{\alpha}^{1} e^{\mu + s\sqrt{2} \text{erf}^{-1}(2p -1)}  dp \\
&= \frac{e^{\mu }}{1-\alpha}\int_{\alpha}^{1} e^{ s\sqrt{2} \text{erf}^{-1}(2p -1)}  dp \\
&= \frac{e^{\mu }}{1-\alpha}  \lp[ -\frac{1}{2} e^{ \frac{s^2}{2}}  \lp( 1 + \text{erf}\lp( \frac{s}{\sqrt{2}} - \text{erf}^{-1}(2p -1) \rp)  \rp) \rp]_{p=\alpha}^1 \\
&= \frac{e^{\mu }}{1-\alpha}  \lp[ \frac{1}{2} e^{ \frac{s^2}{2}} + \frac{1}{2} e^{ \frac{s^2}{2}}  \lp( 1 + \text{erf}\lp( \frac{s}{\sqrt{2}} - \text{erf}^{-1}(2p -1) \rp)  \rp) \rp] \\
&= \frac{1}{2} e^{\mu + \frac{s^2}{2}} \frac{ \lp[ 1 + \text{erf}\lp( \frac{s}{\sqrt{2}} - \text{erf}^{-1}(2\alpha -1) \rp)  \rp] } { 1-\alpha }.
\end{align*}
\qed
\end{proof}
\subsection{Logistic}
Assume $X \sim Logistic(\mu,s)$. Recall that Logistic parameters have range $\mu \in \mathbb{R}$, $s>0$, with $E[X]=\mu$ and $\sigma^2(X)=\frac{s^2 \pi^2}{3}$ and that the Logistic CDF, PDF, and quantile function are given by,
\[  F(x) ={\frac  {1}{1+e^{{-{\frac  {x-\mu }{s}}}}}} \;,\quad f(x)={\frac  {e^{{-{\frac  {x-\mu }{s}}}}}{s\left(1+e^{{-{\frac  {x-\mu }{s}}}}\right)^{2}}}  \;,\quad q_\alpha (X) = \mu + s \ln\lp( \frac{\alpha}{1-\alpha} \rp)  \;.\]

Here, we derive a closed-form expression for the superquantile for the logistic distribution and derive a simple root finding problem for calculating bPOE. We also find that these quantities have a correspondence with the binary entropy function. 
\begin{prop}  \label{prop:logistic_cvar} 
If $X \sim Logistic(\mu,s)$, then
$$ \bar{q}_\alpha (X) = \mu + \frac{sH(\alpha)}{1-\alpha} $$
where $H(\alpha)$ is the binary entropy function $H(\alpha) = -\alpha \ln(\alpha) - (1-\alpha) \ln (1-\alpha)$. Furthermore, for any $x \geq \mu$, if $\alpha$ solves the equation,
$$  \frac{H(\alpha)}{1-\alpha} = \frac{x - \mu }{s} \;,$$
then $\bar{p}_x (X) = 1- \alpha$. Additionally, $\bar{p}_x (X) = 1- \alpha$ if $\alpha$ is the solution to the transformed system,
$$ (1-\alpha)\alpha^{\frac{\alpha}{1-\alpha}} = e^{ -\lp(  \frac{x-\mu}{s} \rp) }  \;.$$
Note that both functions $\frac{H(\alpha)}{1-\alpha}$ and $ (1-\alpha)\alpha^{\frac{\alpha}{1-\alpha}}$ are one-dimensional, convex, and monotonic over the range $\alpha \in [0,1]$, and thus unique solutions exist and can easily be found via root finding methods. 
\end{prop}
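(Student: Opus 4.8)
The plan is to dispatch the superquantile by direct integration of the quantile function, then read off the two bPOE equations from the defining relation, and finally settle existence/uniqueness and the shape claims by differentiating a single auxiliary function.

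\textbf{Step 1 (superquantile).} I would start from $\bar q_\alpha(X)=\frac{1}{1-\alpha}\int_\alpha^1 q_p(X)\,dp$ and substitute the logistic quantile $q_p(X)=\mu+s\ln\frac{p}{1-p}$. Writing $\ln\frac{p}{1-p}=\ln p-\ln(1-p)$ and using $\int\ln y\,dy=y\ln y-y+C$ (with the substitution $y=1-p$ for the second piece), the indefinite integral of $\ln\frac{p}{1-p}$ is $p\ln p+(1-p)\ln(1-p)-p$ up to a constant. Evaluating between $\alpha$ and $1$, and invoking $\lim_{y\to0^+}y\ln y=0$ to handle the endpoint $p=1$, the bracket collapses to $-\alpha\ln\alpha-(1-\alpha)\ln(1-\alpha)=H(\alpha)$. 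Hence $\bar q_\alpha(X)=\frac{1}{1-\alpha}\big[\mu(1-\alpha)+sH(\alpha)\big]=\mu+\frac{sH(\alpha)}{1-\alpha}$, the claimed formula; the only thing to be careful about is recognizing the integral as precisely the binary entropy and the limit bookkeeping at $p=1$.

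\textbf{Step 2 (the two bPOE equations).} Since $X$ is continuous, $\bar p_x(X)=1-\alpha$ where $\alpha$ solves $\bar q_\alpha(X)=x$; substituting the Step 1 formula immediately gives $\frac{H(\alpha)}{1-\alpha}=\frac{x-\mu}{s}$, the first characterization. For the transformed system I would rewrite $\frac{H(\alpha)}{1-\alpha}=-\frac{\alpha}{1-\alpha}\ln\alpha-\ln(1-\alpha)$, so that $e^{-H(\alpha)/(1-\alpha)}=(1-\alpha)\,\alpha^{\alpha/(1-\alpha)}$; exponentiating the equation $-\frac{H(\alpha)}{1-\alpha}=-\frac{x-\mu}{s}$ then yields $(1-\alpha)\alpha^{\alpha/(1-\alpha)}=e^{-(x-\mu)/s}$. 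This step is pure algebra.

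\textbf{Step 3 (monotonicity, existence/uniqueness, convexity).} Let $\phi(\alpha)=\frac{H(\alpha)}{1-\alpha}$. Using $H'(\alpha)=\ln\frac{1-\alpha}{\alpha}$, the quotient-rule numerator $H'(\alpha)(1-\alpha)+H(\alpha)$ simplifies cleanly to $-\ln\alpha$, so $\phi'(\alpha)=\frac{-\ln\alpha}{(1-\alpha)^2}>0$ on $(0,1)$: $\phi$ is strictly increasing with $\phi(0^+)=0$ and $\phi(1^-)=+\infty$, hence for every $x\ge\mu$ the equation $\phi(\alpha)=\frac{x-\mu}{s}$ has a unique root $\alpha\in[0,1)$. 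Since $(1-\alpha)\alpha^{\alpha/(1-\alpha)}=e^{-\phi(\alpha)}$ decreases strictly from $1$ to $0$, the transformed equation likewise has a unique root, taking values in $(0,1]$ that match the range of $e^{-(x-\mu)/s}$. The convexity assertions I would verify by one more differentiation: $\phi''$ and the second derivative of $e^{-\phi}$ are elementary to write out. This is the fussiest part — the expression for $\phi''$ is far less tidy than that for $\phi'$ — and I would emphasize that for the proposition's operational conclusion (unique solvability by one-dimensional root finding) strict monotonicity together with the limiting values already suffices, treating convexity as a supplementary check.

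\textbf{Main obstacle.} The genuinely delicate points are the endpoint limits in Step 1 and nailing the sign of $\phi''$ in Step 3; the rest is routine substitution and algebra once one notices that the integral of the logistic quantile is exactly $H(\alpha)$ and that $\phi'$ collapses to $-\ln\alpha/(1-\alpha)^2$.
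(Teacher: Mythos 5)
Your proposal is correct and follows essentially the same route as the paper: integrate the logistic quantile using $\int\ln y\,dy=y\ln y-y+C$ to get $\mu+\frac{sH(\alpha)}{1-\alpha}$, then obtain both bPOE equations from the defining relation $\bar q_\alpha(X)=x$ by exponentiation. Your Step 3 (the computation $\phi'(\alpha)=-\ln\alpha/(1-\alpha)^2$ and the endpoint limits) actually goes beyond the paper's proof, which asserts the root-finding claims without verification, so that added detail is a welcome supplement rather than a deviation.
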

\begin{proof}
To obtain the superquantile, we have
\begin{align*}
\bar{q}_\alpha (X) &=  \frac{1}{1-\alpha} \int_{\alpha}^{1} q_p (X) dp \\
&=  \frac{1}{1-\alpha} \int_{\alpha}^{1} \mu + s \ln\lp( \frac{\alpha}{1-\alpha} \rp) dp \\
&=   \mu +  \frac{s}{1-\alpha} \int_{\alpha}^{1} \ln(p) - \ln(1-p) dp \\
&= \mu +  \frac{s}{1-\alpha}\lp( \int_{\alpha}^{1} \ln(p) dp  +  \int_{\alpha}^{1}  - \ln(1-p) dp \rp)
\end{align*}
Utilizing simple substitution as well as the identity $\int \ln(y)dy = y \ln(y) - y + C$, we get
\begin{align*}
\bar{q}_\alpha (X) &= \mu + \frac{s}{1-\alpha} \lp( -1 - \alpha \ln \alpha + \alpha - (1-\alpha) \ln(1-\alpha) + (1-\alpha) \rp) \\
&=  \mu + \frac{s}{1-\alpha} \lp( - \alpha \ln \alpha  - (1-\alpha) \ln(1-\alpha)  \rp) \\
&=  \mu + \frac{s}{1-\alpha} H( \alpha) \;.
\end{align*}
To get bPOE, we simply follow the bPOE definition, needing to find $\alpha$ which solves $ \mu + \frac{s}{1-\alpha} H( \alpha)=x$. The transformed system arises from combining logarithms within the superquantile formula and applying exponential transformations.
\qed
\end{proof}
We can also utilize the minimization formula to calculate bPOE. Calculating bPOE in this way has the added benefit of simultaneously calculating the quantile $q_{1-\bar{p}_x(X)} (X)$.
\begin{prop}  \label{prop:logistic_bPOE} 
If $X \sim Logistic(\mu,s)$, then
\[ \bar{p}_x(X) = \min_{\gamma <x} \frac{s\ln( 1+e^{-(\frac{\gamma-\mu}{s}) } ) }{x-\gamma} \;,
\]
which is a convex optimization problem over the range $\gamma \in (-\infty,x)$. Furthermore, the minimum occurs at $\gamma$ such that,
\[ \frac{s\ln( 1+e^{-(\frac{\gamma-\mu}{s}) } ) }{x-\gamma}  = 1-F(\gamma)\;.
\]
\end{prop}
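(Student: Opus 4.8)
The plan is to specialize the one-dimensional bPOE representation $\bar p_x(X)=\min_{\gamma<x}\frac{E[X-\gamma]^+}{x-\gamma}$ recalled in Section 1.2 to the logistic case: once $E[X-\gamma]^+$ is available in closed form, both the displayed formula and the stationarity (root) condition drop out, and the only thing requiring care is the word ``convex''.

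\textbf{Step 1 (the truncated expectation).} I would write $E[X-\gamma]^+=\int_\gamma^\infty\bigl(1-F(t)\bigr)\,dt$ and use $1-F(t)=\bigl(1+e^{(t-\mu)/s}\bigr)^{-1}$ for the logistic. The substitution $u=(t-\mu)/s$ together with the antiderivative $\int\frac{du}{1+e^{u}}=-\ln\bigl(1+e^{-u}\bigr)$ gives $E[X-\gamma]^+=s\ln\bigl(1+e^{-(\gamma-\mu)/s}\bigr)$. Substituting into the general formula immediately yields $\bar p_x(X)=\min_{\gamma<x}\frac{s\ln(1+e^{-(\gamma-\mu)/s})}{x-\gamma}$. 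This step is routine calculus.

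\textbf{Step 2 (convexity --- the delicate point).} Write $\phi(\gamma)=\frac{E[X-\gamma]^+}{x-\gamma}$. The cleanest way to exhibit a genuine convex program is the standard change of variables $a=1/(x-\gamma)$, which maps $\gamma\in(-\infty,x)$ bijectively and monotonically onto $a\in(0,\infty)$ and turns the objective into $E[a(X-x)+1]^+$, the expectation of the pointwise maximum of two affine functions of $a$, hence convex in $a$; this is exactly the first bPOE representation in the box of Section 1.2. I would additionally record that $\phi$ is quasiconvex directly in $\gamma$: the sublevel set $\{\gamma<x:\phi(\gamma)\le c\}$ equals $\{\gamma<x:E[X-\gamma]^+-c(x-\gamma)\le 0\}$, a sublevel set of a convex function of $\gamma$, hence an interval. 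I expect this bookkeeping --- being precise about the variable in which the problem is genuinely convex, rather than forcing a direct second-derivative sign check in $\gamma$ --- to be the one real obstacle; it is cleanly resolved by the reparametrization above.

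\textbf{Step 3 (the root condition).} By the fact recalled in Section 1.2, the minimizer of $\phi$ is $\gamma^\star=q_\alpha(X)$ with $\alpha=1-\bar p_x(X)$, and the minimum value is $\bar p_x(X)$. Since $X$ is continuous, $F(\gamma^\star)=\alpha$, so $1-F(\gamma^\star)=1-\alpha=\bar p_x(X)=\phi(\gamma^\star)$, i.e.\ $\frac{s\ln(1+e^{-(\gamma^\star-\mu)/s})}{x-\gamma^\star}=1-F(\gamma^\star)$, which is the stated condition. Equivalently, and self-containedly, one differentiates $\phi$ using $\frac{d}{d\gamma}E[X-\gamma]^+=-(1-F(\gamma))$; then $\phi'(\gamma)=0$ reduces to $\phi(\gamma)=1-F(\gamma)$, and inserting $E[X-\gamma]^+=s\ln(1+e^{-(\gamma-\mu)/s})$ gives the same equation.
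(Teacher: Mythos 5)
Your argument matches the paper's proof essentially step for step: both compute $E[X-\gamma]^+=\int_\gamma^\infty(1-F(t))\,dt=s\ln\bigl(1+e^{-(\gamma-\mu)/s}\bigr)$, plug this into the bPOE minimization formula, and obtain the stated condition by setting the derivative to zero (equivalently, using $\tfrac{d}{d\gamma}E[X-\gamma]^+=-(1-F(\gamma))$ or the known argmin property $\gamma^\star=q_{1-\bar p_x(X)}(X)$). The only divergence is your Step 2: the paper merely asserts convexity of the objective in $\gamma$, while you justify the ``convex problem'' claim via the reparametrization $a=1/(x-\gamma)$ together with quasiconvexity in $\gamma$ --- a more careful treatment, since literal convexity of the ratio in $\gamma$ over all of $(-\infty,x)$ is not argued in the paper and is in fact delicate.
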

\begin{proof}
This follows from the fact that $E[X-\gamma]^+ = \int_{\gamma}^{\infty} (1-F(t))dt$. Evaluating this integral for $X \sim Logistic(\mu,s)$ yields, $E[X-\gamma]^+= s\ln( 1+e^{-(\frac{\gamma-\mu}{s}) } )$ which can then be plugged into the minimization formula for bPOE. The second part of the proposition follows from the fact that the gradient of the objective function w.r.t. $\gamma$ is given by,
\[  \frac{s\ln( 1+e^{-(\frac{\gamma-\mu}{s}) } )}{ (x-\gamma)^2 } - \frac{  e^{-(\frac{\gamma-\mu}{s} )} } { (x-\gamma)\lp( 1+e^{-(\frac{\gamma-\mu}{s} )} \rp) }  \;.
\]
Setting this gradient to zero and simplifying yields the stated optimality condition.
\qed
\end{proof}


\subsection{Student$-t$}
Assume $X \sim Student$-t$(\nu,s,\mu)$. Recall that Student$-t$ parameters have range $\nu >0$, $s>0$, $\mu>0$ with $E[X]=\mu$ and $\sigma^2(X) = \frac{s^2 \nu}{\nu-2}$, and that the Student$-t$ CDF and PDF are given by,
\[   F(x) = 1- \frac{1}{2} \mathcal{I}_{\nu(x)} \left(\frac{\nu}{2} , \frac{1}{2}\right) \;,\quad f(x)= \frac{ \Gamma( \frac{\nu+1}{2}  )  }{ \Gamma( \frac{\nu}{2}  )  \sqrt{\nu \pi }s  } \lp(   1  +  \frac{  (x - \mu)^2 }{ \nu s^2}  \rp)^\frac{-(\nu+1)}{2} \;,\]
where $\nu(x) = \frac{\nu}{\frac{x-\mu}{s} +\nu}$, $\mathcal{I}_t (a,b)$ is the regularized incomplete Beta function, and $\Gamma(a)$ is the Gamma function.
Note that a general closed-form expression for $q_\alpha(X)$ is not known but is a readily available function within common software packages like EXCEL. 
\begin{prop} \label{prop:student_t}
If $X \sim Student$-t$(\nu,s,\mu)$, then
$$ \bar{q}_\alpha (X) =  \mu + s \lp(  \frac{\nu + T^{-1}(\alpha)^2 }   {(\nu-1)(1-\alpha)}   \rp)  \tau(T^{-1}(\alpha) )$$
where $T^{-1}(\alpha)$ is the inverse of the standardized Student$-t$ CDF and $\tau(x)$ is standardized Student$-t$ PDF. 
\end{prop}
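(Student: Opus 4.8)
The plan is to reduce to the standardized case and then evaluate the tail integral of $z\,\tau(z)$ in closed form. Since $X \sim Student\text{-}t(\nu,s,\mu)$ is a location-scale transform $X = \mu + sZ$ of a standardized Student-$t$ variable $Z$ with $\nu$ degrees of freedom, and since $\bar q_\alpha$ is positively homogeneous and translation-equivariant, I would first note $\bar q_\alpha(X) = \mu + s\,\bar q_\alpha(Z)$. Thus it suffices to show $\bar q_\alpha(Z) = \frac{\nu + T^{-1}(\alpha)^2}{(\nu-1)(1-\alpha)}\,\tau(T^{-1}(\alpha))$, where $T^{-1}(\alpha) = q_\alpha(Z)$ is the standardized quantile and $\tau$ the standardized density. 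Using the definition $\bar q_\alpha(Z) = \frac{1}{1-\alpha}\int_{q_\alpha(Z)}^{\infty} z\,\tau(z)\,dz$, the whole problem comes down to computing $\int_{t_\alpha}^{\infty} z\,\tau(z)\,dz$ with $t_\alpha := T^{-1}(\alpha)$. (I would also remark at the outset that convergence of this integral, equivalently existence of $E[X]$, requires $\nu > 1$, which is why the formula carries a $\nu-1$ in the denominator.)

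The key computational step is the substitution $u = 1 + z^2/\nu$, so that $z\,dz = \tfrac{\nu}{2}\,du$. Writing $\tau(z) = c_\nu \bigl(1 + z^2/\nu\bigr)^{-(\nu+1)/2}$ with the normalizing constant $c_\nu = \Gamma(\tfrac{\nu+1}{2})/\bigl(\Gamma(\tfrac{\nu}{2})\sqrt{\nu\pi}\bigr)$, the integrand becomes $c_\nu\,\tfrac{\nu}{2}\,u^{-(\nu+1)/2}\,du$, whose antiderivative is $-\,c_\nu\,\tfrac{\nu}{\nu-1}\,u^{-(\nu-1)/2}$. Evaluating from $z=t_\alpha$ to $z=\infty$ (the upper limit vanishes since $\nu>1$) yields
\[
\int_{t_\alpha}^{\infty} z\,\tau(z)\,dz \;=\; c_\nu\,\frac{\nu}{\nu-1}\,\Bigl(1 + \frac{t_\alpha^2}{\nu}\Bigr)^{-(\nu-1)/2}.
\]
Then I would rewrite the exponent $-(\nu-1)/2 = -(\nu+1)/2 + 1$, so $\bigl(1+t_\alpha^2/\nu\bigr)^{-(\nu-1)/2} = \bigl(1+t_\alpha^2/\nu\bigr)^{-(\nu+1)/2}\bigl(1+t_\alpha^2/\nu\bigr) = c_\nu^{-1}\tau(t_\alpha)\bigl(1+t_\alpha^2/\nu\bigr)$. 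Substituting this back cancels $c_\nu$ and leaves $\int_{t_\alpha}^{\infty} z\,\tau(z)\,dz = \frac{\nu}{\nu-1}\bigl(1+\frac{t_\alpha^2}{\nu}\bigr)\tau(t_\alpha) = \frac{\nu + t_\alpha^2}{\nu-1}\,\tau(t_\alpha)$.

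Dividing by $1-\alpha$ gives $\bar q_\alpha(Z) = \frac{\nu + T^{-1}(\alpha)^2}{(\nu-1)(1-\alpha)}\tau(T^{-1}(\alpha))$, and reinstating the location and scale via $\bar q_\alpha(X) = \mu + s\,\bar q_\alpha(Z)$ produces the claimed formula. There is no real obstacle here beyond bookkeeping: the only point requiring a small amount of care is the algebraic step that re-expresses the integrated tail $\bigl(1+t_\alpha^2/\nu\bigr)^{-(\nu-1)/2}$ back in terms of the density $\tau$ evaluated at the quantile $t_\alpha$, which is what makes the final expression compact; the rest is the elementary substitution and the location-scale reduction.
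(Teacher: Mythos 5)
Your computation is correct and yields exactly the stated formula, but you get there by a different route than the paper. The paper works directly with the non-standardized density and the representation $\bar q_\alpha(X)=\frac{1}{1-\alpha}\int_{q_\alpha(X)}^{\infty}tf(t)\,dt$: it differentiates the PDF, rearranges the resulting identity, integrates by parts to isolate $\int t f(t)\,dt$, and then evaluates the limits (using L'Hopital for the boundary term at $+\infty$) before converting to standardized form at the very end. You instead standardize first via location--scale equivariance of the superquantile and then find an explicit antiderivative of $z\,\tau(z)$ through the substitution $u=1+z^2/\nu$, which collapses the whole evaluation to one line and avoids the integration by parts and the limit analysis; you also make the requirement $\nu>1$ explicit, which the paper leaves implicit. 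One small point of care: the map $z\mapsto 1+z^2/\nu$ is not monotone on $[t_\alpha,\infty)$ when $t_\alpha=T^{-1}(\alpha)<0$ (i.e.\ for $\alpha<\tfrac12$), so strictly speaking the step should be phrased as verifying that $G(z)=-c_\nu\frac{\nu}{\nu-1}\bigl(1+z^2/\nu\bigr)^{-(\nu-1)/2}$ is a global antiderivative of $z\,\tau(z)$ and applying the fundamental theorem of calculus; the chain rule supplies the factor $2z/\nu$ for all signs of $z$, so your final expression is valid for every $\alpha\in(0,1)$ and there is no gap in the conclusion. Overall your argument is the more elementary and compact of the two, while the paper's manipulation of the density's derivative is the kind of device that also works when no convenient substitution presents itself.
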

\begin{proof}
Since there is no closed-form expression for the quantile, we utilize the representation of the superquantile given by $\frac{1}{1-\alpha} \int_{q_\alpha(X)}^\infty t f(t) dt$.
To evaluate this integral, we first take the derivative of the PDF, giving
\[ \frac{df(x)}{dx} = \frac{-f(x)(x-\mu)(\nu+1)}{\nu s^2 + (x-\mu)} .\]
Rearranging yields,
\[ xf(x)dx = \frac{-\nu s^2 df(x) }{(\nu+1)} -  \frac{(x-\mu)^2df(x)}{(\nu+1)} + \mu f(x) dx .\]
We can then integrate both sides,
\[ \int xf(x)dx = \frac{-\nu s^2 f(x) }{(\nu+1)} -  \frac{1}{(\nu+1)} \int (x-\mu)^2df(x)+ \mu F(x).\]
Integrating by parts, gives us the following form of the middle term,
\[ \int (x-\mu)^2df(x) = (x-\mu)^2 f(x) - 2 \int x f(x) dx + 2 \mu F(x) \;. \]
Then, finally, after substituting this new expression for the middle term and simplifying, we get
\[ \int xf(x)dx = - \frac{(\nu s^2 + (x- \mu)^2 )}{(\nu-1)}f(x) + \mu F(x).\]
Taking the definite integral yields,
\begin{align*}
 \int_{q_\alpha(X)}^\infty xf(x)dx &= \lp( - \lim_{x \rightarrow \infty}\frac{(\nu s^2 + (x- \mu)^2) }{(\nu-1)}f(x) +  \lim_{x \rightarrow \infty} \mu F(x)  \rp)  \\
 & \qquad  \qquad  \quad  -   \lp(- \frac{(\nu s^2 + (q_\alpha(X)- \mu)^2) }{(\nu-1)}f(q_\alpha(X)) + \mu F(q_\alpha(X))   \rp). 
 \end{align*}
It is easy to see that the second limit goes to one and, after applying L'Hopital where necessary, that the first limit goes to zero. This leaves
\begin{align*}
 \int_{q_\alpha(X)}^\infty xf(x)dx &=  \mu -   \lp(- \frac{(\nu s^2 + (q_\alpha(X)- \mu)^2) }{(\nu-1)}f(q_\alpha(X)) + \mu F(q_\alpha(X))   \rp)\\
 &= \mu( 1 - \alpha) +   \lp( \frac{\nu s^2 + (q_\alpha(X)- \mu)^2 }{(\nu-1)}  \rp)   f(q_\alpha(X))   \\
 &= \mu( 1 - \alpha)  + s \lp(  \frac{\nu + T^{-1}(\alpha)^2 }   {(\nu-1)}   \rp)  \tau(T^{-1}(\alpha) ),
 \end{align*}
 where the final step comes from writing the non-standardized quantile $q_\alpha(X)$ and PDF $f(x)$ in their standardized form. Then, finally, dividing by $1-\alpha)$ yields the formula,
 
 \[ \bar{q}_\alpha (X) =\frac{1}{1-\alpha}   \int_{q_\alpha(X)}^\infty xf(x)dx =  \mu + s \lp(  \frac{\nu + T^{-1}(\alpha)^2 }   {(\nu-1)(1-\alpha)}   \rp)  \tau(T^{-1}(\alpha) ) .\]
\qed
\end{proof}


\subsection{Weibull}
Assume $X \sim Weibull(\lambda,k)$. Recall that Weibull parameters have range $\lambda >0$, $k>0$ with $E[X] = \lambda \Gamma(1+\frac{1}{k})$ and $\sigma^2(X)=\lambda^2\lp[ \Gamma(1+\frac{2}{k})  - \Gamma(1+\frac{1}{k})^2 \rp]$, and that the Weibull CDF, PDF, and quantile function are given by,
\[  F(x) = 1- e^{-(x/\lambda)^k} \;,\quad f(x)=\begin{cases}
\frac{k}{\lambda}\left(\frac{x}{\lambda}\right)^{k-1}e^{-(x/\lambda)^{k}} &x\geq0 ,\\
0 & x<0,
\end{cases}  \;,\] \[ \quad q_\alpha (X) = \lambda {(-\ln(1-\alpha))}^{1/k} \;.\]
where $\Gamma(a)=\int _{0}^{\infty}p^{a-1}e^{-p}dp$ is the gamma function. 

\begin{prop} \label{prop:weibull}
If $X \sim Weibull(\lambda,k)$, then
$$ \bar{q}_\alpha (X) =  \frac{\lambda}{1-\alpha} \Gamma_U \lp( 1 + \frac{1}{k} , -\ln(1-\alpha) \rp)$$
where $\Gamma_U(a,b)=\int _{b}^{\infty}p^{a-1}e^{-p}dp$ is the upper incomplete gamma function. 
\end{prop}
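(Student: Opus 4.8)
The plan is to start, just as in the Exponential, Laplace, Logistic, and LogNormal cases, from the quantile integral representation of the superquantile,
$\bar{q}_\alpha (X) = \frac{1}{1-\alpha}\int_{\alpha}^{1} q_p(X)\,dp$,
and substitute the Weibull quantile $q_p(X) = \lambda\,(-\ln(1-p))^{1/k}$. This immediately gives $\bar{q}_\alpha(X) = \frac{\lambda}{1-\alpha}\int_{\alpha}^{1}(-\ln(1-p))^{1/k}\,dp$, so the whole problem reduces to evaluating this single definite integral and matching it to the definition of $\Gamma_U$.

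Next I would perform two successive changes of variable. First set $y = 1-p$, turning the integral into $\lambda\int_{0}^{1-\alpha}(-\ln y)^{1/k}\,dy$. Then set $t = -\ln y$, equivalently $y = e^{-t}$ with $dy = -e^{-t}\,dt$; as $y$ ranges over $(0,\,1-\alpha]$, $t$ ranges over $[-\ln(1-\alpha),\,\infty)$, and the integral becomes $\lambda\int_{-\ln(1-\alpha)}^{\infty} t^{1/k}e^{-t}\,dt$. Writing the exponent as $1/k = a-1$ with $a = 1 + \frac{1}{k}$, this is exactly $\lambda\,\Gamma_U\!\left(1+\frac{1}{k},\,-\ln(1-\alpha)\right)$ by the definition $\Gamma_U(a,b)=\int_b^\infty p^{a-1}e^{-p}\,dp$ given in the statement. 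Dividing by $1-\alpha$ produces the claimed formula.

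There is no real obstacle here; the argument is a routine double substitution. The only things worth a remark are that the improper integral converges — since $k>0$ forces $1/k > -1$, the integrand $t^{1/k}e^{-t}$ is integrable on $[b,\infty)$ for every $b>0$ (and indeed near $t=0$, which covers the limit $\alpha\to 0$, consistent with $E[X]=\lambda\,\Gamma(1+\tfrac{1}{k})$ being finite) — and that the boundary contribution at $p=1$ (i.e. $y=0$, $t=\infty$) vanishes by exponential decay. As a consistency check one could instead begin from $\bar{q}_\alpha(X)=\frac{1}{1-\alpha}\int_{q_\alpha(X)}^\infty x f(x)\,dx$ and substitute $u=(x/\lambda)^k$; this lands on the same incomplete-gamma integral.
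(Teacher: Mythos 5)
Your argument is correct and is essentially the paper's own proof: the paper also starts from $\bar{q}_\alpha(X)=\frac{1}{1-\alpha}\int_\alpha^1 q_p(X)\,dp$ and performs the single substitution $y=-\ln(1-p)$, which is exactly your two-step change of variables combined into one. The extra remarks on convergence and the consistency check are fine but not needed.
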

\begin{proof}
To calculate the superquantile, we utilize the integral representation, which is
\begin{align*}
\bar{q}_\alpha (X) &=  \frac{1}{1-\alpha} \int_{\alpha}^{1} q_p (X) dp \\
&=  \frac{1}{1-\alpha} \int_{\alpha}^{1} \lambda {(-\ln(1-p))}^{1/k} dp \;. 
\end{align*}
To put this integral into the form of the upper incomplete gamma function, make the change of variable $y= -\ln(1-p)$. This gives $e^y = \frac{1}{1-p}$ and $dp = (1-p) dy = e^{-y} dp$ with new lower limit of integration $\alpha \rightarrow -\ln(1-\alpha)$ and upper limit of integration $1 \rightarrow \infty$. Applying to the integral, yields
\begin{align*}
\bar{q}_\alpha (X) &=  \frac{\lambda}{1-\alpha} \int_{-\ln(1-\alpha)}^{\infty}  {y}^{1/k} e^{-y} dy \\
&=  \frac{\lambda}{1-\alpha} \Gamma_U \lp( 1 + \frac{1}{k} , -\ln(1-\alpha) \rp) \;. 
\end{align*}
\qed
\end{proof}

\subsection{Log-Logistic}
Assume $X \sim LogLogistic(a,b)$. Recall that Log-Logistic parameters have range $a >0$, $b>0$ with $E[X]=a\frac{\pi}{b} csc\lp( \frac{\pi}{b} \rp) $ when $b>1$ and \\$\sigma^2(X)=a^2 \lp( \frac{2\pi}{b} csc\lp( \frac{2\pi}{b} \rp) -(\frac{\pi}{b} csc\lp( \frac{\pi}{b} \rp) )^2  \rp)$ when $b>2$, and that the Log-Logistic CDF, PDF, and quantile function are given by,
\[  F(x) = \frac{1}{1 + \lp(\frac{x}{a}\rp)^{-b} } \;,\quad f(x)={\frac  {(b /a )(x/a )^{{b -1}}}{\left(1+(x/a )^{{b }}\right)^{2}}}  \;,\quad q_\alpha (X) = a\lp( \frac{\alpha}{1-\alpha} \rp)^{\frac{1}{b}} \;,\]
where $csc(\cdot)$ is the cosecant function.

\begin{prop} \label{prop:log_logistic}

If $X \sim LogLogistic(a,b)$, then
$$ \bar{q}_\alpha (X) =\frac{a}{1-\alpha}\lp( \frac{\pi}{b} csc\lp( \frac{\pi}{b} \rp) - B_\alpha \lp( \frac{1}{b}+1, 1 -  \frac{1}{b} \rp)    \rp)$$
where $B_y(A_1,A_2)=\int _{0}^{y}p^{A_1-1}(1-p)^{A_2-1}dp$ is the incomplete beta function.
\end{prop}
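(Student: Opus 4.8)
The plan is to reuse the integral representation of the superquantile, $\bar q_\alpha(X) = \frac{1}{1-\alpha}\int_\alpha^1 q_p(X)\,dp$, exactly as in the preceding propositions. Substituting the Log-Logistic quantile $q_p(X) = a\left(\frac{p}{1-p}\right)^{1/b}$ and pulling out the constants gives
$$\bar q_\alpha(X) = \frac{a}{1-\alpha}\int_\alpha^1 p^{1/b}(1-p)^{-1/b}\,dp,$$
so the entire problem reduces to evaluating $I(\alpha) := \int_\alpha^1 p^{1/b}(1-p)^{-1/b}\,dp$.

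Next I would write $I(\alpha) = \int_0^1 p^{1/b}(1-p)^{-1/b}\,dp - \int_0^\alpha p^{1/b}(1-p)^{-1/b}\,dp$. The second term is, directly from the definition $B_y(A_1,A_2)=\int_0^y p^{A_1-1}(1-p)^{A_2-1}dp$, the incomplete beta function $B_\alpha\!\left(\frac{1}{b}+1,\,1-\frac{1}{b}\right)$, since matching exponents forces $A_1-1=\frac1b$ and $A_2-1=-\frac1b$. The first term is the complete beta function $B\!\left(\frac{1}{b}+1,\,1-\frac{1}{b}\right) = \frac{\Gamma(1/b+1)\,\Gamma(1-1/b)}{\Gamma(2)} = \Gamma(1/b+1)\,\Gamma(1-1/b)$.

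I would then simplify this complete beta value: write $\Gamma(1/b+1)=\frac{1}{b}\Gamma(1/b)$ and apply Euler's reflection formula $\Gamma(z)\Gamma(1-z)=\pi/\sin(\pi z)$ with $z=1/b$, which yields $\frac{1}{b}\cdot\frac{\pi}{\sin(\pi/b)} = \frac{\pi}{b}\csc\!\left(\frac{\pi}{b}\right)$. Combining, $I(\alpha) = \frac{\pi}{b}\csc\!\left(\frac{\pi}{b}\right) - B_\alpha\!\left(\frac{1}{b}+1,\,1-\frac{1}{b}\right)$, and multiplying by $\frac{a}{1-\alpha}$ gives the claimed formula.

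The only point that needs care — rather than a genuine obstacle — is convergence: near $p=1$ the integrand behaves like $(1-p)^{-1/b}$, so the complete beta integral is finite precisely when $1-\frac1b>0$, i.e. $b>1$, which is exactly the regime in which $E[X]$ (hence $\bar q_\alpha$) is finite. I would therefore accompany the derivation with the remark that $\bar q_\alpha(X)=\infty$ when $b\le 1$, in parallel with the Pareto case. Everything else is a routine change of exponents and one invocation of the reflection formula.
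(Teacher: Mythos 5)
Your proposal is correct and follows essentially the same route as the paper: both split $\int_\alpha^1 q_p(X)\,dp$ into $\int_0^1 - \int_0^\alpha$, identify the second piece as $B_\alpha\lp(\tfrac{1}{b}+1,\,1-\tfrac{1}{b}\rp)$, and equate the first piece with $\frac{\pi}{b}\csc\lp(\frac{\pi}{b}\rp)$ — the paper does this by citing the known mean $E[X]=a\frac{\pi}{b}\csc\lp(\frac{\pi}{b}\rp)$, while you derive that same value from the complete beta function and Euler's reflection formula, which is just a more self-contained version of the identical step. Your added observation that the argument requires $b>1$ (with $\bar{q}_\alpha(X)=\infty$ otherwise) is correct and makes explicit a restriction the paper leaves implicit.
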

\begin{proof}
To calculate the superquantile, we utilize the integral representation as follows:
\begin{align*}
\bar{q}_\alpha (X) &=  \frac{1}{1-\alpha} \int_{\alpha}^{1} q_p (X) dp \\
&=  \frac{1}{1-\alpha} \lp( \int_{0}^{1} q_p (X) dp  -  \int_{0}^{\alpha} q_p (X) dp \rp)\\
&=  \frac{1}{1-\alpha} \lp( E[X]  -   \int_{0}^{\alpha} q_p (X) dp \rp)\\
&=  \frac{1}{1-\alpha} \lp( E[X]  -  a  \int_{0}^{\alpha} \lp( \frac{p}{1-p} \rp)^{\frac{1}{b}} dp \rp) \;. 
\end{align*}
Now, note first that for $X \sim LogLogistic(a,b)$, we have $E[X] =a\frac{\pi}{b} csc\lp( \frac{\pi}{b} \rp)$. Next, for the incomplete beta function, letting $A_1 = \frac{1}{b}+1$ and $A_2=1 -  \frac{1}{b}$, we can see that
$$B_\alpha(\frac{1}{b}+1,1 -  \frac{1}{b})=\int _{0}^{\alpha}p^{\frac{1}{b}}\,(1-p)^{-\frac{1}{b}}\,dp \;.$$
Using these two facts, we have,
\begin{align*}
\bar{q}_\alpha (X) &=  \frac{1}{1-\alpha} \lp( E[X]  -  a  \int_{0}^{\alpha} \lp( \frac{p}{1-p} \rp)^{\frac{1}{b}} dp \rp) \\
&=  \frac{1}{1-\alpha} \lp( a\frac{\pi}{b} csc\lp( \frac{\pi}{b} \rp)  -  a B_\alpha(\frac{1}{b}+1,1 -  \frac{1}{b}) \rp) \\
&=  \frac{a}{1-\alpha} \lp( \frac{\pi}{b} csc\lp( \frac{\pi}{b} \rp)  -   B_\alpha(\frac{1}{b}+1,1 -  \frac{1}{b}) \rp) \;.
\end{align*}
\qed
\end{proof}


\subsection{Generalized Extreme Value Distribution}
Assume $X$ follows a Generalized Extreme Value (GEV) Distribution, which we denote as $X \sim GEV(\mu,s,\xi)$. Recall that GEV parameters have range $\mu \in \mathbb{R}$, $s>0$, $\xi \in \mathbb{R}$ with $E[X]=\begin{cases}\mu +s (g_{1}-1)/\xi &{\text{if}}\ \xi \neq 0,\xi <1,\\\mu +s \,y &{\text{if}}\ \xi =0,\\\infty &{\text{if}}\ \xi \geq 1,\end{cases}$ and \\$\sigma^2(X)= \begin{cases}s^2\,(g_2-g_1^2)/\xi^2 & \text{if}\ \xi\neq0,\xi<\frac12,\\ s^2\,\frac{\pi^2}{6} & \text{if}\ \xi=0, \\ \infty & \text{if}\ \xi\geq\frac12,\end{cases}$ where $g_k = \Gamma(1- k\xi)$ and $y$ is the Euler-Mascheroni constant.

Additionally, recall that the GEV has CDF, PDF, and quantile function given by,
\[  F(x) =\begin{cases}
e^{-\lp(  1 +\frac{ \xi(x-\mu) }{s} \rp)^\frac{-1}{\xi} }& \xi \neq 0  ,\\
e^{ -e^{-\lp( \frac{x-\mu}{s}  \rp) }}& \xi =0
\end{cases} 
\;,\].   \[\quad f(x)=\begin{cases}
\frac{1}{s}\lp(  1 +\frac{ \xi(x-\mu) }{s} \rp)^{ \frac{-1}{\xi} -1  }e^{-\lp(  1 +\frac{ \xi(x-\mu) }{s} \rp)^\frac{-1}{\xi} }& \xi \neq 0  ,\\
\frac{1}{s} e^{-\lp( \frac{x-\mu}{s}  \rp) } e^{ -e^{-\lp( \frac{x-\mu}{s}  \rp) }}& \xi =0,
\end{cases}  \;
\]
\[
\quad q_\alpha (X)= \begin{cases}
\mu + \frac{s}{\xi} \lp( (\ln(\frac{1}{\alpha})^{-\xi} -1  \rp)& \xi \neq 0  ,\\
\mu - s \ln(-\ln(\alpha)) & \xi =0.
\end{cases} \;\]

\begin{prop}
If $X \sim GEV(\mu,s,\xi)$, then
$$ \bar{q}_\alpha (X) =  \begin{cases}
\mu + \frac{s}{\xi(1-\alpha)} \lp[  \Gamma_L(1-\xi,\ln(\frac{1}{\alpha}))  - (1-\alpha) \rp]& \xi \neq 0  ,\\
\mu +  \frac{s}{(1-\alpha)} \lp(  y+ \alpha\ln(-\ln(\alpha)) - li(\alpha) \rp) & \xi =0,
\end{cases}$$
where $\Gamma_L (a,b)=\int _{0}^{b}p^{a-1}e^{-p}dp$ is the lower incomplete gamma function, $li(x) =\int _{0}^{\alpha} \frac{1}{\ln p} dp$ is the logarithmic integral function, and $y$ is the Euler-Mascheroni constant. 
\end{prop}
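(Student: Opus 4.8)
The plan is to evaluate the integral representation $\bar{q}_\alpha(X) = \frac{1}{1-\alpha}\int_\alpha^1 q_p(X)\,dp$ directly, handling the cases $\xi\neq 0$ and $\xi=0$ separately with the quantile functions recalled just above the statement.

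For $\xi\neq 0$, substitute $q_p(X) = \mu + \frac{s}{\xi}\big((\ln(1/p))^{-\xi}-1\big)$, so that $\bar{q}_\alpha(X) = \mu + \frac{s}{\xi}\left(\frac{1}{1-\alpha}\int_\alpha^1 (\ln(1/p))^{-\xi}\,dp - 1\right)$. The change of variable $y=\ln(1/p)=-\ln p$ (hence $p=e^{-y}$, $dp = -e^{-y}\,dy$) sends the limits $p=\alpha,1$ to $y=\ln(1/\alpha),0$ and turns the remaining integral into $\int_0^{\ln(1/\alpha)} y^{-\xi}e^{-y}\,dy = \Gamma_L(1-\xi,\ln(1/\alpha))$, using $\Gamma_L(a,b)=\int_0^b p^{a-1}e^{-p}\,dp$ with $a=1-\xi$. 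Writing the constant as $-\frac{s}{\xi} = -\frac{s(1-\alpha)}{\xi(1-\alpha)}$ and collecting terms then yields exactly the stated expression. This case is routine; the only thing worth a remark is that convergence of the integral near $p=1$ (equivalently near $y=0$) requires $\xi<1$, the same restriction under which $E[X]$ is finite.

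For $\xi=0$, substituting $q_p(X)=\mu - s\ln(-\ln p)$ reduces everything to evaluating $I(\alpha):=\int_\alpha^1 \ln(-\ln p)\,dp$, after which $\bar{q}_\alpha(X)=\mu - \frac{s}{1-\alpha}I(\alpha)$. Integration by parts with $u=\ln(-\ln p)$, $dv=dp$ produces the antiderivative $p\ln(-\ln p) - \int \frac{dp}{\ln p} = p\ln(-\ln p) - li(p)$, but both $p\ln(-\ln p)$ and $li(p)$ blow up as $p\to 1^-$, so the limits cannot be inserted naively. I would sidestep this by splitting $I(\alpha)=\int_0^1 \ln(-\ln p)\,dp - \int_0^\alpha \ln(-\ln p)\,dp$: the first integral is the classical value $\int_0^1\ln(-\ln p)\,dp = \Gamma'(1) = -y$ (substitute $p=e^{-t}$ to get $\int_0^\infty \ln t\,e^{-t}\,dt$), while on $\int_0^\alpha$ the integration by parts is clean because $p\ln(-\ln p)\to 0$ as $p\to 0^+$, giving $\int_0^\alpha\ln(-\ln p)\,dp = \alpha\ln(-\ln\alpha) - li(\alpha)$. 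Hence $I(\alpha) = -y - \alpha\ln(-\ln\alpha) + li(\alpha)$, and $\bar{q}_\alpha(X) = \mu + \frac{s}{1-\alpha}\big(y + \alpha\ln(-\ln\alpha) - li(\alpha)\big)$, as claimed.

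The main obstacle is this $\xi=0$ branch: the straightforward integration by parts breaks down at the upper endpoint because the logarithmic integral is singular at $1$, and the Euler--Mascheroni constant only surfaces from the cancellation of two divergent pieces. Routing around the singularity via the known value $\int_0^1\ln(-\ln p)\,dp=-y$, together with the clean endpoint behavior at $p=0$, is the cleanest fix. Along the way one should also verify the elementary endpoint facts being used ($p\ln(-\ln p)\to 0$ as $p\to 0^+$, finiteness of $li(\alpha)$ for $\alpha\in(0,1)$, and convergence of the $\xi\neq0$ integral under $\xi<1$), but these require only a sentence each.
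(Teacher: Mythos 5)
Your proposal is correct and follows essentially the same route as the paper: direct integration of the quantile function, with the substitution $y=-\ln p$ giving $\Gamma_L(1-\xi,\ln(1/\alpha))$ for $\xi\neq 0$, and for $\xi=0$ the same split $\int_\alpha^1=\int_0^1-\int_0^\alpha$ using the classical value of $\int_0^1\ln(-\ln p)\,dp$ together with integration by parts on $[0,\alpha]$. Your sign bookkeeping is in fact cleaner than the paper's own derivation, which records $\int_0^1\ln(-\ln p)\,dp$ as $y$ rather than $-y$ and carries a stray leading minus sign in its final line, whereas your computation lands exactly on the stated (correct) formula.
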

\begin{proof}
Assume we have $\xi=0$. Then, we have
\begin{align*}
\bar{q}_\alpha (X) &=  \frac{1}{1-\alpha} \int_{\alpha}^{1} \mu - s \ln(-\ln(p))dp \\
&=  \mu - \frac{s}{1-\alpha} \lp( \int_{0}^{1} \ln(-\ln(p))dp  -  \int_{0}^{\alpha} \ln(-\ln(p))dp \rp)\\
&=  \mu - \frac{s}{1-\alpha} \lp(y  -  \int_{0}^{\alpha} \ln(-\ln(p))dp \rp)\\
&=  \mu - \frac{s}{1-\alpha} \lp(y  + \alpha\ln(-\ln(\alpha)) - li(\alpha) \rp)\\
\end{align*}
Assume now that $\xi \neq 0$. Then, we have that,
\begin{align*}
\bar{q}_\alpha (X) &=  \frac{1}{1-\alpha} \int_{\alpha}^{1} \mu + \frac{s}{\xi} \lp( (\ln(\frac{1}{p})^{-\xi} -1  \rp)dp \\
&=  \mu - \frac{s}{\xi(1-\alpha)} \int_{\alpha}^{1} \lp( (\ln(\frac{1}{p})^{-\xi} -1  \rp)dp\\
&=  \mu + \frac{s}{\xi(1-\alpha)} \lp[  \Gamma_L(1-\xi,\ln(\frac{1}{\alpha}))  - (1-\alpha) \rp]\\
\end{align*}
\qed
\end{proof}

\section{Portfolio Optimization}
A common parametric approach to portfolio optimization is to assume that portfolio returns follow some specified distribution. In this context, particularly when taking a risk averse approach, closed-form representations of the superquantile and bPOE for the specified distribution allow one to formulate a tractable portfolio optimization problem.  In this section, we show that our derived formulas for the superquantile and bPOE reveal important properties about portfolio optimization problems formulated with particular distributional assumptions placed upon portfolio returns.  

Portfolio optimization with the superquantile is common, so we begin by simply pointing out which of the closed-form superquantile formulas yield tractable portfolio optimization problems. Portfolio optimization with bPOE, however, is not common and we show that it can be advantageous compared to the superquantile approach. In particular, superquantile optimization requires that one sets the probability level $\alpha$. One can then observe that for fixed $\alpha$, the optimal superquantile portfolio may change based upon the distribution utilized to model returns. We show that if portfolio returns are assumed to follow a Laplace, Logistic, Normal, or Student-$t$ distribution, the minimal bPOE portfolio's for fixed threshold $x$ are the same regardless of the distribution chosen, meaning that there exists a single portfolio that is $x$-bPOE optimal for multiple choices of distribution.

Note that in this section we will be dealing with asset returns $R$, as it is typical for financial problems, and the loss is the opposite of return: $X = -R$, and $q_\alpha(X) = -q_{1-\alpha}(R)$. 

The portfolio optimization problem consists of finding a vector of asset weights $w \in \mathbb{R}^n$ for a set of $n$ assets with unknown random returns $R =[ R_1, R_2,...,R_n]$ that solves the following optimization problem,

\begin{equation}
\label{port_standard}
\begin{aligned}
& \underset{ w \in \mathbb{R}^n}{ \text{max }}  
& & L(w,R) \\
& s.t. 
& & g_i(w,R) \leq 0 , i=1,...,I\\
& & & h_j(w,R) =0 , j=1,...,J\\
& & & w^T 1 = 1 \\
& & & l \leq w \leq u
\end{aligned}
\end{equation}
where $L(w,R)$ is some function to be maximized,\footnote{Or minimized if we consider the negative.} functions $g_j(w,R)$ and $h_i(w,R)$ enforce inequality and equality constraints respectively, and vectors $l,u$ enforce upper and lower bounds on the individual asset weights. A simple example is the standard Markowitz optimization problem where we maximize the expected utility, which is a weighted combination of the expected return and its variance via a positive trade-off parameter $\lambda \geq 0$:

\begin{equation}
\label{mark_1}
\begin{aligned}
& \underset{ w \in \mathbb{R}^n}{ \text{max }}  
& & w^T\eta - \lambda w^T \Sigma w \\
& s.t. 
& & w^T 1 = 1 \\
& & & l \leq w \leq u
\end{aligned}
\end{equation}

An important aspect of the random portfolio return $w^T R$ which can be seen within the Markowitz problem and will be used later on in this section is the fact that the expectation $E[w^TR]$ and variance $\sigma^2(w^TR)$ are given by $w^T \eta$ and $w^T \Sigma w$ respectively, where $\eta \in \mathbb{R}^n$ is the vector of expected returns for the $n$ assets and $\Sigma \in \mathbb{R}^{n \times n}$ is the covariance matrix for the $n$ assets. This allows us to represent the expected value and variance of the portfolio return in terms of $w$, and consequently to formulate an optimization problem with decision vector $w$. 

\subsection{Superquantile and bPOE Optimization with Qualified Distributions}

As we are dealing with asset returns, and not losses, we need to define the superquantile using that notation. The superquantile is the expected loss above the quantile (conditional expected value of losses in the right tail), so in terms of returns it would be the conditional expected value of returns in the left tail, which can be described by the left superquantile:

$$ -\tilde{q}_{1-\alpha} (R) = \frac{1}{1-\alpha} \int_{0}^{1-\alpha} q_p (R) dp. $$

We can use the closed-form superquantile formulas derived in the previous sections for the right superquantile $\bar{q}_\alpha (R)$ to calculate the left superquantile $\tilde{q}_\alpha (R)$, as

$$ \alpha \tilde{q}_\alpha(R) + (1 - \alpha) \bar{q}_\alpha(R) = \int_{0}^{1} q_p(R) dp = E[R], $$
so
$$  -\tilde{q}_{1-\alpha} (R) = -\frac{1}{1-\alpha} (E[R] - {\alpha} \bar{q}_{1-\alpha}(R)). $$

Since $-\tilde{q}_{1-\alpha} (R) = \bar{q}_\alpha (X)$, bPOE is defined as $$ \bar{p}_x (X) =  \{ 1 - \alpha | \bar{q}_\alpha (X) = x \} = \{ 1 - \alpha | \tilde{q}_{1-\alpha} (R) = -x \}.$$

\subsubsection{Qualified Distributions for Portfolio Optimization}
The superquantile or bPOE portfolio optimization problem has its objective function or one of the constraints defined in terms of $\tilde{q}_{1-\alpha}(w^TR)$ or $\bar{p}_x (w^T R)$. To formulate such a problem using a given distribution, we begin by defining a set of \textit{qualified} distribution's which we will consider. These qualified distribution's satisfy the following set of conditions, which allow us to verify that they make sense in terms of portfolio theory and admit superquantile/bPOE expression in a way that can represented in terms of decision variable $w$:

\begin{definition}[Qualified Distribution] A \textit{qualified} distribution $\mathcal{D}$ satisfies the following conditions:\\
(C1)  $w^TR\sim \mathcal{D} \implies \tilde{q}_{1-\alpha} (w^T R) = w^T \eta - \sqrt{w^T \Sigma w} \zeta(\alpha, \Theta)  $, where $\zeta(\alpha, \Theta)$ is a function depending only upon $\alpha$ and possibly a set of fixed parameters $\Theta$ that do not depend on $w$, $\eta$ is the vector of the expected asset returns, and $\Sigma$ is the covariance matrix for asset returns. \\
(C2) The statistical parameters of the distribution $\mathcal{D}$ must be consistent with the descriptive statistics of real-life asset returns. \\
(C3) The shape of the PDF for the given distribution $\mathcal{D}$ must conform to the shape of the empirical PDF of typical real-life asset returns.
\end{definition}

Why should we enforce these preconditions? Condition (C1) guarantees that the superquantile can be expressed in terms of $w$. This is necessary to express the superquantile optimization problem. For example, if we assume that $w^TR \sim Logistic(\mu,s)$, we need to be able to express $\mu$ and $s$ in terms of $w$. Since $\mu = E[w^T R] = w^T \eta$ and $w^T \Sigma w=\sigma^2(w^TR)=\frac{s^2 \pi^2}{3}$, we have
\begin{align*}
\tilde{q}_{1-\alpha} (R) &= \frac{1}{1-\alpha} (E[R] - {\alpha} \bar{q}_{1-\alpha}(R)) = \frac{1}{1-\alpha}(\mu - \alpha [\mu + \frac{s}{\alpha}(-(1-\alpha)ln(1-\alpha)-\alpha ln(\alpha)) ]) \\
&= \mu - \frac{s}{1-\alpha}(-\alpha ln(\alpha) - (1-\alpha)ln(1-\alpha)) \\
&= w^T \eta - \sqrt{w^T \Sigma w} \frac{\sqrt{3}(-\alpha \ln(\alpha) - (1-\alpha) \ln (1-\alpha))}{\pi(1-\alpha)} \; ,
\end{align*}
which satisfies (C1). Other examples that satisfy this condition are Laplace, Normal, Exponential, Student$-t$, Pareto, GPD, and GEV. Note that for Student$-t$, we assume that parameter $\nu$ is fixed and the same for all assets, i.e. $\Theta = \{\nu \} $, and for GPD/GEV distribution's $\Theta = \{ \xi \}$.

Condition (C2) and (C3) are simple sanity checks on our model assumptions. For example, for exponential distribution $E[R]=\frac{1}{\lambda}=\sigma(R)$, however for the real-life asset returns the sample mean is not generally equal to the sample standard deviation. So, Exponential and Pareto distribution's make no sense in portfolio optimization problems even if they satisfy (C1). As for (C3), a distribution is not practical if there is obvious discrepancy between the shape of its PDF and the shape of the empirical PDF observed using real-life asset returns. The latter is generally bell-shaped or, more likely, inverse-V shaped, and is never shaped like the PDF of an Exponential, Pareto/GPD, or Weibull for $k < 1$.

This leaves us with a set of four elliptical distribution's which satisfy all three conditions: Logistic, Laplace, Normal, and Student$-t$, as well as the nonelliptical GEV distribution. For the latter, with $\xi \neq 0$ the left superquantile can be expressed as
$$ \tilde{q}_{1-\alpha} (R) = w^T \nu - \sqrt{w^T \Sigma w} \frac{ \alpha \Gamma (1 - \xi) - \Gamma_U (1 - \xi, ln(\frac{1}{1-\alpha})) }{ (1 - \alpha) \sqrt{g_2 - g_1^2} }, $$
where $\Gamma_U (a,b)=\int _{b}^{\infty}p^{a-1}e^{-p}dp$ is the upper incomplete Gamma function, $g_k = \Gamma (1 - k \xi)$.

\subsubsection{Superquantile and bPOE Optimization}
An alternative to the Markowitz problem is to find the portfolio with minimal superquantile (\ref{min_CVAR}) or bPOE (\ref{min_bPOE}).

\begin{equation}
\label{min_CVAR}
\begin{aligned}
& \underset{ w \in \mathbb{R}^n}{ \text{min }}  
& & -\tilde{q}_{1-\alpha} (w^T R) \\
& s.t. 
& &  w^T 1 = 1 \\
& & & l \leq w \leq u
\end{aligned}
\end{equation}

\begin{equation}
\label{min_bPOE}
\begin{aligned}
& \underset{ w \in \mathbb{R}^n}{ \text{min }}  
& & \bar{p}_x (w^T R) \\
& s.t. 
& &  w^T 1 = 1 \\
& & & l \leq w \leq u
\end{aligned}
\end{equation}

For qualified distribution's, however, these problems can be greatly simplified. First, we see that (\ref{min_CVAR}) reduces to (\ref{min_CVAR2}):
\begin{equation}
\label{min_CVAR2}
\begin{aligned}
& \underset{ w \in \mathbb{R}^n}{ \text{max }}  
& & w^T \eta - \sqrt{w^T\Sigma w} \zeta(\alpha, \Theta) \\
& s.t. 
& &  w^T 1 = 1 \\
& & & l \leq w \leq u
\end{aligned}
\end{equation}

\cite{Khokhlov} shows that the optimal solution to (\ref{min_CVAR2}) is the same as the optimal solution to a Markowitz optimization problem (\ref{mark_1}) with $\lambda = \frac{\zeta(\alpha, \Theta )}{2\sigma(w^T R)}$. Thus, the superquantile optimal portfolio is also mean-variance optimal in the Markowitz sense.

Now, for bPOE we see that the picture is actually much simpler. Specifically, we have the following proposition.
\begin{prop} If we assume that $w^T R \sim \mathcal{D}$ and we have that $\mathcal{D}$ is a qualified distribution, then (\ref{min_bPOE}) reduces to (\ref{max_Sharpe}).

\begin{equation}
\label{max_Sharpe}
\begin{aligned}
& \underset{ w \in \mathbb{R}^n}{ \text{max }}  
& & \frac{w^T \eta + x}{ \sqrt{w^T \Sigma w } } \\
& s.t. 
& &  w^T 1 = 1 \\
& & & l \leq w \leq u
\end{aligned}
\end{equation}
\end{prop}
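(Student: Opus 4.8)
The plan is to unwind the definition of bPOE in return form, use condition (C1) to convert the defining equation into a statement purely about the ratio $\rho(w):=\frac{w^{T}\eta+x}{\sqrt{w^{T}\Sigma w}}$, and then exploit strict monotonicity of the auxiliary function $\zeta(\cdot,\Theta)$ to show that the bPOE objective in (\ref{min_bPOE}) is a strictly decreasing transformation of $\rho(w)$, so that minimizing it is the same as maximizing $\rho$ over the (identical) feasible set, i.e. solving (\ref{max_Sharpe}).

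First I would write, using the return-side definition of bPOE recalled just above the statement, that for a qualified distribution $\mathcal{D}$ with $w^{T}R\sim\mathcal{D}$,
$$\bar{p}_x(w^{T}R)=\bigl\{\,1-\alpha \;\big|\; \tilde{q}_{1-\alpha}(w^{T}R)=-x\,\bigr\},$$
and then substitute the expression guaranteed by (C1), namely $\tilde{q}_{1-\alpha}(w^{T}R)=w^{T}\eta-\sqrt{w^{T}\Sigma w}\,\zeta(\alpha,\Theta)$. The defining equation $\tilde{q}_{1-\alpha}(w^{T}R)=-x$ then becomes simply $\zeta(\alpha,\Theta)=\rho(w)$, so that $\bar{p}_x(w^{T}R)=1-\zeta^{-1}\!\bigl(\rho(w)\bigr)$ once invertibility of $\zeta(\cdot,\Theta)$ is established. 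It is worth noting already at this point that $\zeta$ — the only quantity carrying information about the particular distribution chosen — has been isolated and, as the argument will show, disappears entirely from the reduced problem; this is exactly the mechanism behind the distribution-independence claimed in the surrounding discussion.

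The key step is to verify that $\zeta(\cdot,\Theta)$ is continuous and strictly increasing over the relevant range of $\alpha$. This follows from general principles: since $\bar{q}_\alpha(X)=-\tilde{q}_{1-\alpha}(w^{T}R)=-w^{T}\eta+\sqrt{w^{T}\Sigma w}\,\zeta(\alpha,\Theta)$ with $X=-w^{T}R$, and since $\bar{q}_\alpha(X)=\frac{1}{1-\alpha}\int_\alpha^1 q_p(X)\,dp$ is strictly increasing in $\alpha$ for a continuously distributed loss with positive density (it is the average of an increasingly extreme upper tail), $\zeta(\alpha,\Theta)$ must be strictly increasing; because $\zeta$ does not depend on $w$, this is an intrinsic property of the distribution family. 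Alternatively one simply reads it off the closed forms for the Logistic, Laplace, Normal, Student-$t$, and GEV cases, each of which is the standardized superquantile of the corresponding zero-mean, unit-variance law. Consequently $\zeta^{-1}$ exists and is strictly increasing, so $w\mapsto 1-\zeta^{-1}(\rho(w))$ is a strictly decreasing function of $\rho(w)$; since (\ref{min_bPOE}) and (\ref{max_Sharpe}) share the feasible set $\{w:\,w^{T}1=1,\ l\le w\le u\}$, their minimizers and maximizers respectively coincide, which is the claim.

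I expect the main obstacle to be not the algebra but the bookkeeping about the admissible domain. bPOE at threshold $x$ is defined only for $x\ge E[X]=-w^{T}\eta$, i.e. for $\rho(w)\ge 0$, with $\zeta(0,\Theta)=0$; on the complementary set $\rho(w)<0$ one has $\bar{p}_x(w^{T}R)=1$ by convention, its maximal value, while $\rho(w)\le 0$ there, so the equivalence ``minimize bPOE $\Leftrightarrow$ maximize $\rho$'' still holds but should be stated with this caveat. One should also assume $w^{T}\Sigma w>0$ on the feasible set (nondegenerate portfolios) so that $\rho$ and $\zeta^{-1}$ are well defined, together with the mild requirement that the feasible set contains at least one $w$ with $\rho(w)>0$, which guarantees the bPOE minimum is attained in the region where the reduction to (\ref{max_Sharpe}) is exact.
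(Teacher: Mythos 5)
Your argument is essentially the paper's own proof: both substitute the (C1) representation into the defining equation of $\bar{p}_x(w^TR)$ and then use monotonicity of $\zeta(\cdot,\Theta)$ in $\alpha$ (with $\Theta$ independent of $w$) to trade minimizing $1-\alpha$ for maximizing $\frac{w^T\eta+x}{\sqrt{w^T\Sigma w}}$ over the common feasible set. The only difference is presentational — the paper reformulates (\ref{min_bPOE}) as a joint problem in $(w,\alpha)$ with the equality constraint $\zeta(\alpha,\Theta)=\frac{w^T\eta+x}{\sqrt{w^T\Sigma w}}$ rather than explicitly inverting $\zeta$, and it omits the domain caveats ($x\geq E[X]$, $w^T\Sigma w>0$) that you rightly note.
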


 \begin{proof}
 First, note that by definition of the superquantile, we know that $\zeta(\alpha,\Theta)$ must be an increasing function w.r.t. $\alpha \in [0,1]$. Second, as $ \bar{p}_x (X) = \{ 1 - \alpha | \tilde{q}_{1-\alpha} (R) = -x \}$ and $\tilde{q}_{1-\alpha} (w^T R) = w^T \eta - \sqrt{w^T \Sigma w} \zeta(\alpha, \Theta)$ for qualified distribution's, the problem (\ref{min_bPOE}) can be rewritten as:
 \begin{equation}
\label{min_CVAR3}
\begin{aligned}
& \underset{ w \in \mathbb{R}^n,\alpha}{ \text{min }}  
& & 1-\alpha \\
& s.t. 
& &  -w^T \eta + \sqrt{w^T\Sigma w} \zeta(\alpha,\Theta ) = x \\
& & & w^T 1 = 1 \\
& & & l \leq w \leq u
\end{aligned}
\end{equation}
which can then be written as:
  \begin{equation}
\label{min_CVAR4}
\begin{aligned}
& \underset{ w \in \mathbb{R}^n}{ \text{max }}  
& & \alpha \\
& s.t. 
& &    \zeta(\alpha,\Theta ) = \frac{w^T \eta + x}{\sqrt{w^T\Sigma w}} \\
& & & w^T 1 = 1 \\
& & & l \leq w \leq u
\end{aligned}
\end{equation}
 Finally, since $\zeta(\alpha,\Theta )$ is an increasing function w.r.t. $\alpha$ and $\Theta$ does not dependent upon $w$, we see that we can formulate the maximization as (\ref{max_Sharpe}) without changing the argmin. 
 \qed
  \end{proof}

This proposition has the important implication for portfolio theory that the optimal bPOE portfolio for the qualified distribution does not depend on the distribution itself. The same portfolio will have the lowest bPOE for any of those distribution's. The fact that bPOE optimization is, in some sense, independent from distributional assumptions makes it preferable to superquantile optimization.

\subsubsection{Numerical Demonstration}
In this example, we consider a global equity portfolio that consists of 6 market portfolios - U.S., Japan, U.K., Germany, France, and Switzerland, represented by the corresponding MSCI indices - MXUS, MXJP, MXGB, MXDE, MXFR, MXCH. Parameters of returns for portfolio components are provided in Table~\ref{table:1} (source: Capital IQ sample of monthly returns from April 1987 to April 1996, annualized).

\begin{table}
\caption{Portfolio Return Data}
\label{table:1}

\resizebox{15.9cm}{!} {
\begin{tabular}{|l|r r|r r r r r r|}
\hline
Asset &Expected &Standard               &\multicolumn{6}{|l|}{Correlations} \\
ticker&return   &deviation&MXUS		    &MXJP		&MXGB		&MXDE		&MXFR		&MXCH \\ 
\hline
MXUS  &10.25\%	&13.79\%	&1		    &0.190041	&0.639133	&0.481857	&0.499406	&0.605384\\
MXJP  &6.90\%	&26.05\%	&0.190041	&1			&0.450337	&0.251601	&0.378753	&0.373964\\
MXGB  &8.81\%	&19.16\%	&0.639133	&0.450337	&1			&0.579918	&0.584215	&0.654687\\
MXDE  &9.15\%	&20.31\%	&0.481857	&0.251601	&0.579918	&1			&0.753072	&0.628426\\
MXFR  &8.83\%	&20.40\%	&0.499406	&0.378753	&0.584215	&0.753072	&1			&0.580626\\
MXCH  &13.85\%	&17.45\%	&0.605384	&0.373964	&0.654687	&0.628426	&0.580626	&1\\
\hline
\end{tabular}
}
\end{table}
This problem was solved using a non-linear programming algorithm and the results are provided in Table~\ref{table:2}. The respective values of $\lambda$ are also provided, which allows deriving the same portfolios using the standard MVO solver that uses a quadratic programming algorithm.
\begin{table}
\caption{Optimal Superquantile (CVaR) Portfolio's}
\label{table:2}

\newcolumntype{d}[1]{D{.}{.}{#1} }
\resizebox{15.9cm}{!} {
\begin{tabular}{|l|r|r r r r|r r r r|}
\hline
Asset   &\multicolumn{1}{|c|}{min risk} &	\multicolumn{4}{|c|}{CVaR 99\% optimal portfolios} & \multicolumn{4}{|c|}{CVaR 95\% optimal portfolios} \\
ticker  &\multicolumn{1}{|c|}{portfolio}&normal &t (df=3)&Laplace	&logistic &normal  &t (df=3) &Laplace &logistic\\
\hline
MXUS    &70.99\% &65.80\% &67.59\% &67.03\%	&66.53\%  &64.23\% &64.78\%	 &65.05\% &64.64\% \\
MXJP    &13.98\% &9.61\%  &11.11\% &10.64\%	&10.21\%  &8.28\%  &8.74\%	 &8.97\%  &8.62\%  \\
MXGB    &0.00\%	 &0.00\%  &0.00\%  &0.00\%	&0.00\%	  &0.00\%  &0.00\%	 &0.00\%  &0.00\%  \\
MXDE    &9.24\%	 &2.87\%  &5.07\%  &4.37\%	&3.76\%	  &0.95\%  &1.61\%	 &1.94\%  &1.44\%  \\
MXFR    &0.00\%	 &0.00\%  &0.00\%  &0.00\%	&0.00\%	  &0.00\%  &0.00\%	 &0.00\%  &0.00\%  \\
MXCH    &5.79\%	 &21.72\% &16.22\% &17.96\%	&19.50\%  &26.54\% &24.87\%	 &24.04\% &25.30\% \\
\hline
Return  &9.89\%	 &10.68\% &10.40\% &10.49\%	&10.57\%  &10.91\% &10.83\%	 &10.79\% &10.85\% \\
St.dev. &12.86\% &13.01\% &12.93\% &12.95\%	&12.97\%  &13.11\% &13.08\%	 &13.06\% &13.09\% \\
$\lambda$ &        &20.48   &31.28   &26.82   &23.80    &15.73   &17.11    &17.88	  &16.73   \\
\hline
\end{tabular}
}
\end{table}
Table~\ref{table:2} shows that superquantile optimal portfolios are not the same as the global minimum variance portfolio (min. risk portfolio), but are quite close to it. Distributional assumptions play their role in the optimal portfolios composition, with the Student-t distribution rendering the most conservative allocation for CVaR 99\%. However, the differences between optimal portfolios for CVaR 95\% are insignificant.

We can note from (\ref{min_CVAR2}) that if portfolio return is constrained from below, unless this constraint is very close to the return of the global minimum variance portfolio, it results in the superquantile optimization being essentially the same as the variance minimization. If the risk is constrained from above, that superquantile optimization is the same as return maximization. 

Using the same set of assets, we also solved the bPOE optimization problem (\ref{min_bPOE}) with thresholds $x = 0.16$ and $x = 0.25$ (i.e. losses exceeding 16\% and 25\% of the initial portfolio respectively), $l=0$, and $u=1$. Table~\ref{table:3} shows the results: the minimal bPOE achieved, the optimal portfolio composition and parameters, and CVaR for the optimal portfolios for all distribution's.
\begin{table}
\caption{Optimal bPOE Portfolio's}
\label{table:3}

\resizebox{15.9cm}{!} {
\begin{tabular}{|l|r r r r|r r r r|}
\hline
Assumed distribution& normal & t (df=3) & Laplace & logistic & normal  & t (df=3) & Laplace & logistic\\
\hline
bPOE threshold, $x$& \multicolumn{4}{|c|}{16\%}	          & \multicolumn{4}{|c|}{25\%} \\
bPOE value, $\bar{p}_x(X)$& 5.13\% & 6.21\%   & 7.46\%  & 6.36\%   & 0.80\%  & 2.93\%   & 2.81\%  & 1.86\%\\
\hline
Asset tiker      & \multicolumn{8}{|c|}{bPOE-optimal portfolio composition} \\
MXUS	         & 64.20\%&	64.19\%	 & 64.20\% & 64.20\%  & 65.95\% & 65.95\%  & 65.95\% &65.95\%\\
MXJP	         & 8.26\% & 8.27\%	 & 8.25\%  & 8.25\%	  & 9.73\%	& 9.73\%   & 9.73\%	 & 9.73\%\\
MXGB             & 0.00\% & 0.00\%   & 0.00\%  & 0.00\%   & 0.00\%  & 0.00\%   & 0.00\%  & 0.00\%\\
MXDE             & 0.90\% & 0.91\%   & 0.90\%  & 0.90\%   & 3.05\%  & 3.05\%   & 3.06\%  & 3.05\%\\
MXFR             & 0.00\% & 0.00\%   & 0.00\%  & 0.00\%   & 0.00\%  & 0.00\%   & 0.00\%  & 0.00\%\\
MXCH             & 26.64\%& 26.63\%  & 26.64\% & 26.65\%  & 21.27\% & 21.27\%  & 21.27\% & 21.27\%\\
\hline
Return           & 10.92\%&	10.92\%	 & 10.92\% & 10.92\%  & 10.65\% & 10.65\%  & 10.65\% & 10.65\%\\
St.dev.	         & 13.12\%&	13.12\%	 & 13.12\% & 13.12\%  & 13.00\% & 13.00\%  & 13.00\% & 13.00\%\\
\hline
Test distribution& \multicolumn{8}{|c|}{CVaR for the test distribution's} \\
normal	         & 16.00\%&	14.93\%	& 13.87\% & 14.79\%   & 25.00\%	& 18.95\%  & 19.16\% & 21.16\%\\
t (df=3)	     & 18.14\%&	16.00\%	& 14.05\% & 15.74\%   & 46.31\%	& 25.00\%  & 25.56\% & 31.46\%\\
Laplace	         & 19.48\%&	17.70\%	& 16.00\% & 17.48\%   & 36.62\%	& 24.61\%  & 25.00\% & 28.79\%\\
logistic	     & 17.61\%&	16.18\%	& 14.81\% & 16.00\%   & 31.14\%	& 21.71\%  & 22.01\% & 25.00\%\\
\hline
\end{tabular}
}
\end{table}
All the optimal solutions were generated by the non-linear programming algorithm for different distribution's, but the results support the conclusion that the optimal portfolio composition does not depend on the distribution (small discrepancies are due to the optimization algorithm accuracy).

\section{Parametric Density Estimation with Superquantile's}
One of the motivation's for providing closed-form superquantile formulas is so that they can be used within common parametric estimation frameworks. The Exponential, Parteo/GPD, Laplace, Normal, LogNormal, Logistic, Student-t, Weibull, LogLogistic, and GEV represent a wide range of distribution's that can now be utilized within these parametric procedures, but with superquantile's incorporated into the fitting criteria. We illustrate this idea by proposing a simple variation of the Method of Moments (MM), which we call the Method of Superquantile's (MOS), where superquantile's at varying levels of $\alpha$ take the place of moments. Our numerical example utilizes a heavy tailed Weibull to illustrate MOS, since it is particularly well-suited for asymmetric heavy-tailed data. However, any of the listed distribution's could be used as well. 

\subsection{Method of Superquantile's}
The MM is a well known tool for estimating the parameters of a distribution when moments are available in parametric form and desired moments are either assumed to be known or are measured from empirical observations. It looks for the distribution $f_\Theta(x)$, parameterized by $\Theta$, with moments equal to some known moments or, if moments are unknown, empirical moments. With $n$ moments used, the problem reduces to solving a set of $n$ equations w.r.t. the set of parameters $\Theta$ of the distribution family. 

This method, though, can be generalized where moments are replaced by other distributional characteristics, such as the superquantile and quantile. We utilize superquantile's in this context. This method provides flexibility through the choices of different $\alpha$, allowing the user to focus the fitting procedure on particular portions of the distribution. This flexibility is advantageous compared to other methods such as MM or maximum likelihood (ML), since these fitting methods treat each portion of the distribution equally. When fitting the tail is important, for example, and there are many samples around the mean with few samples in the tail, it can be desirable to focus the fitting procedure on carefully fitting the tail samples. As will be shown, one can focus MOS by choice of $\alpha$. One will see that this procedure is similar to fitting with Probability Weighted Moments (PWM)\footnote{also sometimes called L-moments.}, but where MOS is much more straightforward with superquantiles far easier to interpret than PWM's.

We formulate the following problem, where $\hat{\bar{q}}_\alpha(X)$ denotes either a known superquantile or an empirical estimate from a sample of $X$ and $\bar{q}_\alpha(X_{f_\Theta})$ denotes parameterized formulas for the superquantile when $X$ has density function $f_\Theta$ with the set of parameters $\Theta$:\\

\noindent \textbf{Method of Superquantiles}:
Fix $\alpha_1,...,\alpha_k \in [0,1]$ and choose a parametric distribution family $f_\Theta$ with parameters $\Theta$. Solve for $\Theta$ such that,
\[ \bar{q}_{\alpha_i} (X_{f_\Theta}) = \hat{\bar{q}}_{\alpha_i}(X) \text{ for all } i=1,...,k,\]
which is a system of $k$ equations in $|\Theta|$ unknowns.

This problem, however, may not have a solution. For example, if $k=2$ and the parametric family only has a single parameter (i.e. $|\Theta|=1$). In this case, one can solve the following surrogate Least Squares minimization problem:\\

\noindent \textbf{LS Method of Superquantiles (LS-MOS)}:
Fix $\alpha_1,...,\alpha_k \in [0,1]$ and choose a parametric distribution family $f_\Theta$ with parameters $\Theta$. Choose weights $c_1,...,c_k>0$ and solve for,
\[ \Theta \in \underset{\Theta}{\text{argmin }} \sum_i c_i \lp(\bar{q}_{\alpha_i} (X_{f_\Theta}) - \hat{\bar{q}}_{\alpha_i}(X) \rp)^2 \;. \]
This procedure finds the distribution which has superquantile's that are \textit{close} to the empirical superquantile's. The freedom to select $\alpha_i$ as well as $c_i$ provides the user with much flexibility as to which portion of the distribution should match more exactly the empirical superquantile's. 

\subsubsection{Example Customization: Conservative Tail Fitting}
When sample size is small and the tail of the distribution at hand is long, it is likely that the tail will be difficult to characterize from empirical data since few observations will be observed in the tail (with high probability). The proposed method of superquantile's can easily, however, be made more conservative based upon empirical data in an intuitive way. For example, one could have the following condition where $\epsilon_i$ is a pre-specified constant such that $0<\epsilon_i\leq \alpha_i$:
\[ \bar{q}_{\alpha_i-\epsilon_i} (X_{f_\Theta}) = \hat{\bar{q}}_{\alpha_i}(X).\]
Or, for the least squares variant, one can fit the problem,

\[ \min_\Theta \sum_i c_i (\bar{q}_{\alpha_i-\epsilon_i} (X_{f_\Theta}) - \hat{\bar{q}}_{\alpha_i}(X))^2 \]

Notice that these conditions are effectively making the assumption that the empirical superquantile has underestimated the true tail expectation, which is often the case with heavy tailed distribution's. 

\subsubsection{Example: Weibull Distribution Fitting}
We illustrate the basic method on fitting a Weibull distribution, with $\Theta=(\lambda,k)$, from a small sample of 50 observations. We took two independent samples, denoted $S_1,S_2$, of size 50 from a Weibull with $\lambda=.5, k=1.4$. We then estimated the Weibull parameters using MM, ML, and the LS-MOS. The MM was solved using the first two moments. The LS-MOS was solved twice. It was first solved with $\alpha_1=.15, \alpha_2=.75, c_1=c_2=1$, a choice which was made to mimic the behavior of MM and ML, where the fit emphasizes most of the observed data. To put more emphasis on the tail observations, it was also solved with $\alpha_1=.5, \alpha_2=.75, \alpha_3=.95, c_1=c_2=c_3=1$. We denote these solutions as LS1, LS2 respectively. The ML solution is available in closed-form and we solved MM, LS1, and LS2 using Scipy's optimization library.\footnote{Specifically, we used the \textit{leastsq} function which implements MINPACK's \textit{lmdif} routine. This routine requires function values and calculates the Jacobian by a forward-difference approximation.}

Looking at Figure~\ref{fig:S1} for $S_1$ and Figure~\ref{fig:S2} for $S_2$, we see that the LS1 fit is, indeed, much like the MM a ML fit for both data sets. However, we see that the LS2 fit is the best in both cases. The ML, MM, and LS1 methods have put too much emphasis on the observations around the mode. The LS2 fit, however, has put appropriate emphasis on the less frequent observations in the tail. 

It is also important to notice how the differences in $S_1$ and $S_2$ have affected the fit from each method. Looking at the differences between $S_1$ and $S_2$, we can see that the samples differ in the observed density in the lower portion of the range. This is directly reflected in the fits given by MM, ML, and LS1. Compared to their fits on $S_1$, they are more heavily favoring the left side of the distribution. The LS2 fit, however, is robust to these differences between data sets and, by focusing on the tail, has remained mostly unchanged from the fit on $S_1$. This is the intended effect from the selection of larger values for $\alpha$ in LS2.

We duplicated this procedure on a heavier tailed Weibull. We took 50 samples of a Weibull with true parameters $k=1 , \lambda=.5$ and fit MM, ML, LS1, and LS2 using the empirical data. Figure~\ref{fig:H1} and Figure~\ref{fig:H2} highlight different aspects of the resulting fits. We see that LS2 clearly provides the best fit, with Figure~\ref{fig:H2} in particular showing that MM, ML, and LS1 underestimate the tail densities. MM, ML, and LS1 put more emphasis on fitting the observations around the mode. As intended, however, LS2 focuses more on fitting the right tail observations and arrives at a better fit.

\begin{figure}
\label{none}
\resizebox{1\linewidth}{!} {
\begin{minipage}{.5\textwidth}
  \captionsetup{width=.9\linewidth}
  \includegraphics[width=3in,height=2in]{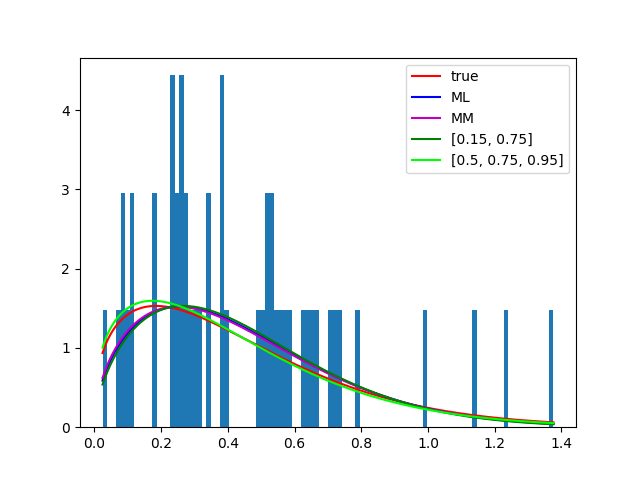} 
 \captionof{figure}{ Fits using sample $S_1$. PDF's displayed with normalized histogram of $S_1$ sample given in background.  }
  \label{fig:S1}

\end{minipage}%
\begin{minipage}{.5\textwidth}

  \captionsetup{width=.9\linewidth}
  \includegraphics[width=3in,height=2in]{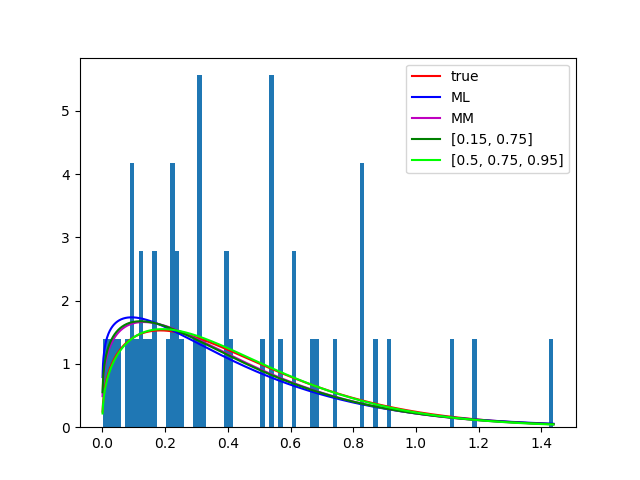} 
 \captionof{figure}{ Fits using sample $S_2$. PDF's displayed with normalized histogram of $S_2$ sample given in background.  }
  \label{fig:S2}

\end{minipage}

}
\end{figure}

\begin{figure}
\label{none}
\resizebox{1\linewidth}{!}{
\begin{minipage}{.5\textwidth}
  \captionsetup{width=.9\linewidth}
  \includegraphics[width=3in,height=2in]{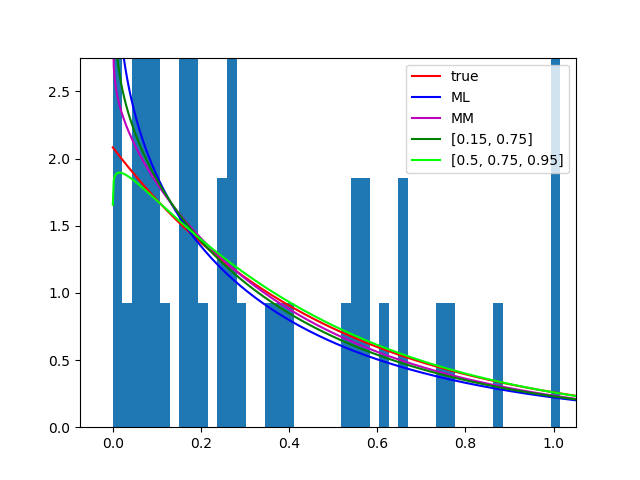} 
 \captionof{figure}{ Left side of distribution for fits on sample from Weibull with true $k=1 , \lambda=.5$. }
  \label{fig:H1}

\end{minipage}%
\begin{minipage}{.5\textwidth}

  \captionsetup{width=.9\linewidth}
  \includegraphics[width=3in,height=2in]{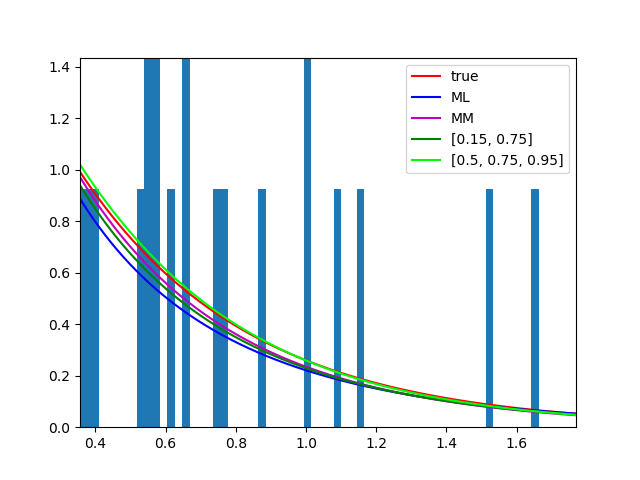} 
 \captionof{figure}{ Right side of distribution for fits on sample from Weibull with true $k=1 , \lambda=.5$.  }
  \label{fig:H2}

\end{minipage}
}
\end{figure}

\subsubsection{Constrained Likelihood and Entropy Maximization}
While we focused primarily on a variant of the method of moments, the formulas provided for superquantile's and bPOE can be used in other parametric procedures. For example, one could consider a constrained variant of the maximum likelihood or maximum entropy method, where superquantile constraints are introduced. Letting $H(f_\Theta)$ denote the entropy of the random variable with density function $f_\Theta$, and $y_i$ denote an observation, constrained entropy maximization and maximum likelihood can be set up as follows:
\[ ML: \;\;\max_{f_\Theta \in \mathcal{F} }   \sum_i \log (f_\Theta(y_i) ) \; \;,\; \; ME: \;\; \max_{f_\Theta \in \mathcal{F} }   H(f_\Theta) \]
where $\mathcal{F} = \{ f_\Theta \;  | \; \bar{q}_{\alpha_i} (X_{f_\Theta}) \leq \hat{\bar{q}}_{\alpha_i}(X) \;\;\forall i=1,...,k \}$.

While we leave full exploration of this framework for future work, this simple formulation illustrates another potential use for the provided superquantile and bPOE formulas within traditional parametric frameworks.

\section{Conclusion}
In this paper, we first derived closed-form formulas for the superquantile and bPOE, then utilized them within parametric portfolio optimization and density estimation problems. We were able to derive superquantile formulas for a variety of distribution's, including ones with exponeitial tails (Exponential, Pareto/GPD, Laplace), symmetric distribution's (Normal, Laplace, Logistic, Student-t), and asymmetric distribution's with heavy tails (LogNormal, Weibull, LogLogistic, GEV). With bPOE formulas, while we had less success deriving truly closed-form solutions, we saw that it can still be calculated by solving a one-dimensional convex optimization problem or one-dimensional root finding problem.

We then utilized these formulas to develop two parametric procedures, one in portfolio optimization and one in density estimation. We first found that formulas for Normal, Laplace, Student-t, Logistic, and GEV are all distribution's which yield tractable superquantile and bPOE portfolio optimization problems. Furthermore, we found that bPOE-optimal portfolios are more robust to changing distributional assumptions compared to superquantile-optimal portfolios. Specifically, bPOE optimal portfolios are optimal, simultaneously, for an entire class of distributions. Finally, we presented a variation on the method of moments where moments are replaced by superquantile's. This parametric procedure is made possible by our closed-form formula's and we illustrate its use in the case of heavy tailed assymetric data, where additional emphasis on fitting the tail via superquantile conditions can be highly desirable. We find that this method makes it easy to direct the focus of the fitting procedure on tail samples.

\bibliographystyle{spbasic}
 \bibliography{Closed_Form_bPOE}

\end{document}